\newtheorem{definition}{Definition}[section]
\newtheorem{theorem}{Theorem}[section]
\newtheorem{lemma}{Lemma}[section]
\newtheorem{remark}{Remark}[section]
\newtheorem{assumption}{Assumption}[section]
\newtheorem{corollary}{Corollary}[section]
\title{\LARGE \bf
	Iterative Learning Economic Model Predictive Control}
\author{$\text{Yushen Long}^{1}$, $\text{Lihua Xie}^{1}$ and $\text{Shuai Liu}^{2}$ 
	\thanks{$^{1}$School of Electrical and Electronic Engineering, Nanyang Technological University, Singapore 639798.        {\tt\small \{yslong, elhxie\}@ntu.edu.sg} }
	\thanks{$^{2}$School of Control Science and Engineering, Shandong University, Jinan 250061, China.       {\tt\small liushuai@sdu.edu.cn} }
}
\begin{document}
	\maketitle
	\begin{abstract}
		An iterative learning based economic model predictive controller (ILEMPC) is proposed for repetitive tasks in this paper. Compared with existing works, the initial feasible trajectory of the proposed ILEMPC is not restricted to be convergent to an equilibrium so it can handle various types of control objectives: stabilization, tracking a periodic trajectory and even pure economic optimization. The controller can learn from the previous closed-loop trajectory, resulting in a performance which is guaranteed to be no worse than the previous one. Under some standard assumptions in model predictive control, we show that recursive feasibility is ensured. Furthermore, for stabilization problem, the convergence of each learned trajectory and the learning process are established provided the initial trajectory is convergent. Numerical examples show that the proposed control strategy works well for different types of control tasks and systems.
		
	\end{abstract}
	\section{Introduction}\label{sec1}
	A recent survey in \cite{Samad17} indicates that model predictive control (MPC) is the second most successful control technology in industry and the most successful advanced control technology if one excludes PID control from the advanced ones. A lot of existing literature have contributed to the theoretical analysis \cite{Mayne00}, \cite{Rawlingsbook}, \cite{MAYNE20142967} and practical application \cite{Qin03}, \cite{AFRAM2014343}, \cite{6775336} of MPC.
	
	Note that though the control action of an MPC controller comes from the optimal solution of an open-loop optimal control problem, the resulted closed-loop trajectory is usually different from the optimal closed-loop trajectory. This motivates us to consider if we could improve the closed-loop performance of the system further. The combination of iterative learning control (ILC) and MPC seems to be a promising direction. ILC \cite{Bristow06} is a control strategy for systems which execute the same task multiple times. The performance of the system can be improved by learning from previous executions. ILC usually assumes that for each iteration, the system works under the same initial condition and disturbance realization. The main feature of ILC is that the controller uses information from previous executions to improve the performance of the system, such as minimization of tracking error and rejection of periodic disturbance \cite{JayH07}. The combination of ILC and MPC has been studied in a few papers. In \cite{Kwang99}, the authors propose an ILMPC for batch processes, which is based on a time-varying MIMO linear model. Experiments on a nonlinear batch reactor system show that it outperforms the traditional PID controller and ILC controller. In \cite{KSLee00} the authors prove that for linear systems, the tracking error of the controller in \cite{Kwang99} converges to zero as the number of iterations goes to infinity. In \cite{LEE2000641}, the authors further extend the analysis to linear time-varying systems with disturbances. The tracking errors of previous iterations are explicitly incorporated into the control input of current iteration in order to minimize the tracking error. An observer is also designed when system is subject to deterministic or stochastic disturbances. In \cite{OH2016284}, an ILMPC is formulated based on an incremental state-space model. It is proved that for a disturbance-free linear system, the tracking error converges to zero. An extension to cases with disturbances is also discussed and tested by numerical examples. An ILMPC for nonlinear systems is proposed in \cite{CUELI20081023} based on a series of time-varying linear models along the state trajectory. Assuming that the desired reference is reachable, the authors prove that tracking error converges to zero under some mild assumptions. The non-linearity of system model is handled by a T-Z fuzzy model in \cite{LIU20131023}. The disturbance is rejected by an MPC and an ILC is designed to minimize the accumulative tracking error and excessive input movement. The tracking error along iterations is also proved to be convergent to zero. Different from the aforementioned papers, in \cite{Rosolia17}, the authors do not assume that the reference trajectory is known. A `database' is constructed by using trajectories of previous iterations and the terminal condition of MPC controller is formulated by choosing the best trajectory from the `database'. Under some convexity conditions, the authors prove that if the trajectory converges as iteration index goes to infinity, then the limit trajectory is the optimal solution of a quasi-infinite horizon optimal control problem.
	
	The MPC formulations in the aforementioned references are all trying to minimize the tracking error, with respect to a desired trajectory \cite{KSLee00,LEE2000641,CUELI20081023} or an equilibrium \cite{Rosolia17}. However, such a tracking type cost function does not necessarily represent the actual economic cost involved in plant operation. Economic MPC (EMPC) \cite{Diehl11} has been studied in recent years as a tool to trade-off system behavior between two extreme cases: pure economic optimization and  pure tracking problem. Compared with standard stabilizing MPC, where to ensure the stability, the stage cost must be chosen as a positive definite function with respect to the equilibrium, EMPC allows an arbitrary stage cost function, and hence performance indexes other than tracking error could be handled. As a result, EMPC could be adopted in more real world applications where tracking error is not the main consideration, such as process industry \cite{ELLIS20142561}, water distribution systems \cite{WANG201723} and smart buildings \cite{MA20141282}.
	
	The objective of this paper is to design an ILC algorithm which is not limited to the tracking problem. A designed controller should be able to learn from previous iterations to improve the closed-loop system performance, which is not necessarily the tracking error. Considering the constraints and performance optimization, we combine the iterative learning approach with EMPC formulation and propose an ILEMPC algorithm.
	
	The contribution of this paper is summarized as follows:
	
	1)  We propose a novel ILEMPC algorithm for iterative tasks. Compared with \cite{Rosolia17}, we do not assume that the initial feasible trajectory converges to a steady state. Therefore, more types of control tasks can be handled by our methods. In our MPC controller formulation, no terminal cost function is used, which allows us to handle the situations with infinite accumulative performance index, such as imperfect tracking. Such kind of control task cannot be directly accomplished by the method proposed in \cite{Rosolia17} since the terminal cost used there will be infinite. Furthermore, the terminal constraint in our formulation is a single equality constraint, which is commonly used in existing MPC literature \cite{Mayne90}, \cite{Michalska91}, \cite{Angeli12}, so that the computational complexity of the proposed algorithm is the same as a standard MPC problem. The controller formulation in \cite{Rosolia17}, on the other hand, requires to solve a mixed integer programming at each time instant, which costs significantly more computational resources than a standard one. Finally, we do not assume that the optimum of the stage cost is the steady state, which allows us to optimize economic cost of the plant directly.
	
	2) We show that the recursive feasibility is guaranteed for the proposed MPC controller formulation. For the cases with infinite accumulative performance index, we prove that the average cost of the $j$-th iteration is not worse than that of the $(j-1)$-th iteration. For the cases when the initial feasible trajectory converges to a steady state, we prove that, under similar assumptions in \cite{Angeli12}, the convergence can be preserved for each iteration. Furthermore, the performance improvement along learning process is also guaranteed. Finally, we show that if the iteration converges, then we can obtain the $N$-receding-horizon optimal (see Definition \ref{recedingopt}) trajectory over the infinite horizon, provided some assumptions on the uniqueness of the optimums are met.
	
	The rest of this paper is organized as follows. In Section \ref{formulation} we formulate the iterative control problem, introduce the performance index to be optimized and give the ILEMPC formulation. In Section \ref{analysis}, we present theoretical analysis of the proposed ILEMPC algorithm. An extension to cases with average constraints is investigated in Section \ref{avecon}. In Section \ref{example}, a few numerical examples are given to illustrate the effectiveness of ILEMPC for different types of control tasks. In Section \ref{conclusion}, some conclusions will be drawn. 
	
	Some remarks on notations are introduced as follows. We use $\mathbb{R}$ to denote the set of real numbers. $\mathbb{R}^n$ and $\mathbb{N}$ denote $n$-dimensional Euclidean space and the set of natural numbers, respectively. For a vector $x\in\mathbb{R}^n$, $\|x\|_2$ and $\|x\|_Q$ denote its 2-norm and $Q$-norm, i.e., $\|x\|^2_Q=x^TQx$, where $Q$ is a positive definite matrix. Finally, we use $I_n$ to denote the $n\times n$ identity matrix.
	
	\section{Problem Formulation and ILEMPC Design}\label{formulation}
	
	Consider a dynamic system
	\begin{equation}
	x(k+1)=f(x(k),u(k)),~x(0)=x^0,\label{sys}
	\end{equation}
	where $x\in\mathbb{X}\subset \mathbb{R}^n$ is the state and $u\in\mathbb{U}\subset \mathbb{R}^m$ is the control input, $\mathbb{X}$ and $\mathbb{U}$ are compact.
	
	Suppose that at the very beginning we have a feasible state and control sequence:
	\begin{eqnarray}
	x_0(0),~x_0(1),~x_0(2),\ldots;\nonumber\\
	u_0(0),~u_0(1),~u_0(2),\ldots,\nonumber
	\end{eqnarray}
	where $x_0(0)=x^0$, $x_0(i)\in\mathbb{X},~u_0(i)\in\mathbb{U},\forall i=0,1,2,\ldots.$
	
	The initial feasible state and control sequence could have an arbitrary pattern: converges to a steady state, or converges to a periodic trajectory or even is chaotic. For the simplicity of theoretical analysis, we assume that the length of the initial feasible state and control sequence is infinite. We use 
	\begin{eqnarray}
	x_j(0),~x_j(1),~x_j(2),\ldots;\nonumber\\
	u_j(0),~u_j(1),~u_j(2),\ldots,\nonumber
	\end{eqnarray}
	to collect the state and control sequences of the $j$-th iteration.
	
	We are interested in the following performance index, depending on which one is well defined:
	
	1)
	\begin{equation}
	\sum_{k=0}^{\infty}l(x(k),u(k));\nonumber
	\end{equation} 
	
	2)
	\begin{equation}
	\limsup_{T\to \infty}\frac{1}{T}\sum_{k=0}^{T-1}l(x(k),u(k))\nonumber
	\end{equation}
	and
	\begin{equation}
	\liminf_{T\to \infty}\frac{1}{T}\sum_{k=0}^{T-1}l(x(k),u(k)).\nonumber
	\end{equation}
	
	The stage cost $l(x,u)$ satisfies the following assumption:
	
	\begin{assumption}\label{continuity}
		$l(x,u)$ is continuous in $\mathbb{X}\times\mathbb{U}$.
	\end{assumption}
	
	Assume that for every iteration, the initial state of the system is $x^0$. We propose the following iterative learning economic MPC to optimize the performance index:
	For the $j$-th iteration, at time instant $k$, the following optimization problem is solved:
	
	\textbf{Problem~1}
	\begin{equation}
	\min_{u_j(k|k),\ldots,u_j(k+N-1|k)}\sum_{i=k}^{k+N-1}l(x_j(i|k),u_j(i|k))\nonumber
	\end{equation}
	subject to
	\begin{eqnarray}
	x_j(i+1|k)&=&f(x_j(i|k),u_j(i|k)),\nonumber\\
	x_j(i|k)&\in&\mathbb{X},\nonumber\\
	u_j(i|k)&\in&\mathbb{U},\nonumber\\
	x_j(k+N|k)&=&x_{j-1}(k+N),\label{terminal}\\
	x_j(k|k)&=&x_j(k),~i=k,\ldots,k+N-1.\nonumber
	\end{eqnarray}
	
	Denote the optimal solution and the corresponding state trajectory as
	\begin{eqnarray}
	u_j^*(k|k),~u_j^*(k+1|k),\ldots,~u_j^*(k+N-1|k),\nonumber
	\end{eqnarray}
	\begin{equation}
	x_j^*(k|k),~x_j^*(k+1|k),\ldots,~x_j^*(k+N-1|k),~x_j^*(k+N|k),\nonumber
	\end{equation}
	and the optimal value of the objective function as $V^*_j(x_j(k))=\sum_{i=k}^{k+N-1}l(x^*_j(i|k),u^*_j(i|k))$. The control input to be fed to the plant is $u_j(k)=u_j^*(k|k)$.
	
	\begin{theorem}\label{feasibility}
		Suppose that an initial feasible state and control sequence is available, then Problem 1 is feasible for any iteration $j$ and any time instant $k$.
	\end{theorem}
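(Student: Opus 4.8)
The plan is to establish feasibility by a nested induction: an outer induction over the iteration index $j$ and, for each fixed $j$, an inner induction over the time instant $k$. The device that preserves feasibility is the standard MPC ``shift the tail and append one step'' construction, with the twist that the appended terminal step is borrowed not from a local terminal control law but from the previous iteration's closed-loop trajectory $x_{j-1}(\cdot)$, which the terminal constraint \eqref{terminal} is precisely designed to match.

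For the inner induction, fix $j$ and assume that the entire closed-loop trajectory $\{x_{j-1}(i),u_{j-1}(i)\}_{i\ge 0}$ of the previous iteration is available, feasible, and defined over the infinite horizon. At the base case $k=0$ we have $x_j(0)=x^0=x_{j-1}(0)$, so choosing the candidate inputs $u_{j-1}(0),\ldots,u_{j-1}(N-1)$ reproduces exactly the admissible states $x_{j-1}(0),\ldots,x_{j-1}(N)$, and the terminal condition $x_j(N|0)=x_{j-1}(N)$ required by \eqref{terminal} holds automatically. For the inductive step, suppose Problem 1 is feasible at time $k$, so that the applied input $u_j(k)=u_j^*(k|k)$ yields $x_j(k+1)=x_j^*(k+1|k)$. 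I would construct a feasible candidate at time $k+1$ by taking the tail $u_j^*(k+1|k),\ldots,u_j^*(k+N-1|k)$ of the current optimal solution, which reproduces the admissible states up to $x_j^*(k+N|k)$, and then appending the single input $u_{j-1}(k+N)$. The key observation is that the terminal constraint at time $k$ gives $x_j^*(k+N|k)=x_{j-1}(k+N)$, so the appended step transitions along the previous iteration's dynamics, $f(x_{j-1}(k+N),u_{j-1}(k+N))=x_{j-1}(k+N+1)$, landing exactly on the terminal target $x_{j-1}((k+1)+N)$ required at time $k+1$. All dynamics, state, and input constraints hold because every element of the candidate is drawn either from the (feasible) current optimal trajectory or from the (feasible) previous iteration; hence Problem 1 is feasible at $k+1$.

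The outer induction then closes the argument. Its base case is supplied directly by the hypothesis of the theorem: the given initial feasible sequence plays the role of iteration $j-1$ for $j=1$ and furnishes a complete admissible infinite-horizon trajectory. The inductive hypothesis at iteration $j$ is exactly that iteration $j-1$ produced such a complete feasible infinite closed-loop trajectory in $\mathbb{X}\times\mathbb{U}$, and this is precisely the conclusion delivered by the inner induction applied to iteration $j-1$. The two inductions are therefore intertwined, and the main point requiring care is their interface: one must verify that the inner induction over $k$ not only establishes pointwise feasibility but also guarantees that the resulting closed-loop trajectory $x_j(\cdot)$ is defined and admissible for all future times, so that it is legitimately available as the reference $x_{j-1}(\cdot)$ for the next iteration. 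This is the step I expect to be the most delicate; once it is handled, the admissibility of the borrowed terminal input $u_{j-1}(k+N)\in\mathbb{U}$ and of the target state $x_{j-1}(k+N+1)\in\mathbb{X}$ follows immediately from the feasibility of the previous iteration.
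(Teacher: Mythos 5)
Your proposal is correct and follows essentially the same argument as the paper: the base case at $k=0$ uses the previous iteration's trajectory segment $x_{j-1}(0),\ldots,x_{j-1}(N)$ with inputs $u_{j-1}(0),\ldots,u_{j-1}(N-1)$, and the inductive step shifts the tail of the current optimal solution and appends $u_{j-1}(k+N)$, using the terminal constraint \eqref{terminal} to land on $x_{j-1}(k+N+1)$, closing both inductions exactly as the paper does. Your explicit attention to the interface between the two inductions (that feasibility of Problem 1 at every $k$ guarantees the closed-loop pair $(x_j(k),u_j(k))\in\mathbb{X}\times\mathbb{U}$ for all $k$, so iteration $j$ legitimately serves as the reference for iteration $j+1$) is a point the paper leaves implicit, but it is a refinement of the same argument rather than a different one.
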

	\begin{proof}
		Suppose that after the $(j-1)$-th iteration, a feasible state and control sequence $x_{j-1}(k)$ and $u_{j-1}(k),~k\in\mathbb{N}$ is obtained. Then for the $j$-th iteration, at time instant $0$, the following state and control sequence is feasible, by the feasibility of $x_{j-1}(k)$ and $u_{j-1}(k),~k\in\mathbb{N}$:
		\begin{eqnarray}
		x_j(k|0)&=&x_{j-1}(k),~k=0,\ldots,N\nonumber\\
		u_j(k|0)&=&u_{j-1}(k),~k=0,\ldots,N-1.\nonumber
		\end{eqnarray}
		
		Suppose that at time instant $k$ of the $j$-th iteration, Problem 1 is feasible. By the terminal constraint \eqref{terminal}, we have $x_j^*(k+N|k)=x_{j-1}(k+N)$. Therefore, for time instant $k+1$, we can construct the following candidate solution:
		\begin{eqnarray}
		u_j^*(k+1|k),~u_j^*(k+2|k),\ldots,~u_j^*(k+N-1|k),~u_{j-1}(k+N)\nonumber
		\end{eqnarray}
		and the corresponding state trajectory
		\begin{equation}
		x_j^*(k+1|k),~x_j^*(k+2|k),\ldots,~x_j^*(k+N-1|k),~x_{j-1}(k+N),~x_{j-1}(k+N+1),\nonumber
		\end{equation}
		which are feasible by the feasibility of the $(j-1)$-th iteration. The theorem can be concluded by induction.
	\end{proof}
	
	\begin{remark}
		It might be desirable to use terminal inequality constraints to substitute terminal equality constraints for a larger set of feasible initial condition $x^0$ in standard non-iterative-learning MPC. However, in our problem formulation, such a set is determined by how an initial feasible trajectory is constructed, which is not related to the terminal constraint \eqref{terminal}. Therefore, equality terminal constraints in Problem 1 do not introduce strong conservativeness.
	\end{remark}
	
	\begin{remark}
		Compared with the algorithm proposed in \cite{Rosolia17}, our method has a simpler formulation and can be applied to more types of control tasks. In particular, we do not use a terminal cost in the objective function so we can handle the situations where $l(x(k),u(k))$ does not go to 0 as $k\to\infty$ such as imperfect tracking and economic optimization. Such a control task cannot be handled by the existing approach \cite{Rosolia17} directly since the terminal cost used there is the accumulative cost from time instant $k+N$ to $\infty$, which will be infinite. Furthermore, the terminal constraint \eqref{terminal} is a single equality constraint, resulting in a nonlinear programming used in standard MPC algorithm. On the other hand, the terminal constraint used in \cite{Rosolia17} is a set of discrete points, which collects data from all previous trajectories, leading to a mixed-integer nonlinear programming which is known to be NP hard in general. Finally, in terms of storage requirement, the algorithm proposed in this paper only uses the trajectory of the last iteration while the one in \cite{Rosolia17} requires trajectories of all previous iterations so the terminal constraint there will be more and more complex as the learning procedure continues and larger and larger storage space is required.
	\end{remark}

	\section{Analysis}\label{analysis}
	
	\subsection{Average performance analysis}
	
	We first consider the case when $x_0(k)$ does not converge to a steady state. This situation could happen in some chemical process where the performance index does not achieve optimal value at the steady state. In this case we are interested in the average performance of the state and control sequence.
	
	Denote $\bar{S}_j=\limsup_{T\to \infty}\frac{1}{T}\sum_{k=0}^{T-1}l(x_j(k),u_j(k))$ and $\underline{S}_j=\liminf_{T\to \infty}\frac{1}{T}\sum_{k=0}^{T-1}l(x_j(k),u_j(k))$. Then we have the following result:
	
	\begin{theorem}\label{average}
		If Assumption \ref{continuity} is satisfied, then $\bar{S}_{j+1}\le\underline{S}_{j},~j\in\mathbb{N}$.
	\end{theorem}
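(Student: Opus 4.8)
The plan is to derive a dissipation-type inequality linking the realized stage cost of the $(j+1)$-th iteration to the value function $V^*_{j+1}$ and to the stage cost incurred by the $j$-th iteration, then to telescope it over a growing horizon and pass to Ces\`aro averages. Throughout I rely on Theorem \ref{feasibility}, which guarantees that Problem 1 is feasible at every time instant, so that $V^*_{j+1}$ is defined along the whole closed-loop trajectory.

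First I would reuse the shift-and-append construction from the proof of Theorem \ref{feasibility}. At time $k+1$ of the $(j+1)$-th iteration, the sequence obtained by dropping the first element of the optimal solution computed at time $k$ and appending $u_j(k+N)$ is feasible, because the terminal constraint \eqref{terminal} gives $x^*_{j+1}(k+N|k)=x_j(k+N)$, so applying $u_j(k+N)$ lands on $x_j(k+N+1)=x_j(k+1+N)$ as required. Comparing the optimal cost at time $k+1$ with the cost of this candidate, and isolating the first stage cost of the problem solved at time $k$ (which equals the realized $l(x_{j+1}(k),u_{j+1}(k))$ since $x^*_{j+1}(k|k)=x_{j+1}(k)$ and $u_{j+1}(k)=u^*_{j+1}(k|k)$), yields the per-step bound
\begin{equation}
l(x_{j+1}(k),u_{j+1}(k)) \le V^*_{j+1}(x_{j+1}(k)) - V^*_{j+1}(x_{j+1}(k+1)) + l(x_j(k+N),u_j(k+N)).\nonumber
\end{equation}

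Next I would sum over $k=0,\ldots,T-1$. The value-function terms telescope to $V^*_{j+1}(x^0)-V^*_{j+1}(x_{j+1}(T))$, and the previous-iteration terms reindex to $\sum_{k=N}^{T+N-1} l(x_j(k),u_j(k))$, so that
\begin{equation}
\frac{1}{T}\sum_{k=0}^{T-1} l(x_{j+1}(k),u_{j+1}(k)) \le \frac{V^*_{j+1}(x^0)-V^*_{j+1}(x_{j+1}(T))}{T} + \frac{1}{T}\sum_{k=N}^{T+N-1} l(x_j(k),u_j(k)).\nonumber
\end{equation}
By Assumption \ref{continuity} and the compactness of $\mathbb{X}\times\mathbb{U}$, $l$ is bounded; hence $V^*_{j+1}$, being a sum of $N$ stage costs, is uniformly bounded, and the first term on the right vanishes as $T\to\infty$. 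The finite end corrections produced by the shift in the summation index, together with the factor $(T+N)/T\to 1$, do not alter the limiting average of the $j$-th iteration.

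The remaining and most delicate step is to pass to the limit so that the right-hand side collapses to $\underline{S}_j$ rather than to the larger quantity $\bar{S}_j$. My approach is to evaluate the inequality along a subsequence of horizons $T_i\to\infty$ chosen so that $\frac{1}{T_i}\sum_{k=0}^{T_i-1} l(x_j(k),u_j(k))\to\underline{S}_j$; along that subsequence the bounding average tends to $\underline{S}_j$ while the left-hand side remains an average of the $(j+1)$-th iteration's cost. The hard part will be reconciling this subsequential bound with the definition of $\bar{S}_{j+1}$ as a limit superior, since a priori the horizons realizing the limit superior on the left need not coincide with those realizing the limit inferior on the right. Carrying this final limit argument through carefully, rather than the mechanical telescoping that precedes it, is where I expect the genuine difficulty to lie.
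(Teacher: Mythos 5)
Everything in your proposal up to the telescoped average bound is correct and is in fact verbatim the paper's own route: the same shift-and-append candidate from Theorem \ref{feasibility}, the same per-step inequality $l(x_{j+1}(k),u_{j+1}(k))\le V^*_{j+1}(x_{j+1}(k))-V^*_{j+1}(x_{j+1}(k+1))+l(x_j(k+N),u_j(k+N))$, the same telescoping, and the same boundedness of $V^*_{j+1}$ via Assumption \ref{continuity} and compactness. The genuine gap is the final limit passage that you flag yourself, and you should know it is not merely delicate---it cannot be closed by any choice of subsequence. Writing $c_T=\frac{1}{T}\sum_{k=0}^{T-1}l(x_{j+1}(k),u_{j+1}(k))$ and $d_T=\frac{1}{T}\sum_{k=0}^{T-1}l(x_{j}(k+N),u_{j}(k+N))$, your bound has the form $c_T\le d_T+O(1/T)$, a comparison at the \emph{same} horizon $T$; from such a bound the only extractable limit statements are the matched ones, $\limsup_T c_T\le\limsup_T d_T$ and $\liminf_T c_T\le\liminf_T d_T$, i.e.\ $\bar{S}_{j+1}\le\bar{S}_j$ and $\underline{S}_{j+1}\le\underline{S}_j$. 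Along your subsequence $T_i$ realizing $\underline{S}_j$ you do get $\limsup_i c_{T_i}\le\underline{S}_j$, but $\bar{S}_{j+1}$ may be attained along horizons disjoint from $\{T_i\}$, where $d_T$ is close to $\bar{S}_j$, and nothing in the bound excludes this. The failure is real, not an artifact of technique: for $N=1$ the paper's own later remark observes that the closed loop reproduces the previous state trajectory, so if the iteration-$j$ Ces\`aro average oscillates ($\underline{S}_j<\bar{S}_j$, e.g.\ stage costs arranged in blocks of geometrically growing length) then $\bar{S}_{j+1}=\bar{S}_j>\underline{S}_j$ and the cross inequality is violated.

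It is worth comparing this with how the paper itself ``completes'' the step: it asserts
\begin{equation}
\liminf_{T\to\infty}\bigl[-c_T+d_T\bigr]\le-\limsup_{T\to\infty}c_T+\liminf_{T\to\infty}d_T,\nonumber
\end{equation}
which, with $a_T=-c_T$ and $b_T=d_T$, reads $\liminf(a_T+b_T)\le\liminf a_T+\liminf b_T$ --- the \emph{reverse} of the valid superadditivity $\liminf(a_T+b_T)\ge\liminf a_T+\liminf b_T$ (take $d_T=c_T$ oscillating: the left side is $0$, the right side is strictly negative). The valid mixed inequalities, $\liminf(a+b)\le\liminf a+\limsup b$ and $\liminf(a+b)\le\limsup a+\liminf b$, yield exactly the matched bounds $\bar{S}_{j+1}\le\bar{S}_j$ and $\underline{S}_{j+1}\le\underline{S}_j$ and nothing stronger. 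So your instinct about where the genuine difficulty lies is precisely right: the step is invalid in general, your argument (like the paper's) actually proves the weaker matched monotonicity, and the theorem's cross bound $\bar{S}_{j+1}\le\underline{S}_j$ only follows under the additional hypothesis that the iteration-$j$ average exists as a limit, $\bar{S}_j=\underline{S}_j$, which is the situation in which the result is effectively invoked in the convergent-case analysis later in the paper.
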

	
	\begin{proof}
		Suppose that at the $j$-th iteration and time instant $k$, we have the optimal solution of Problem 1:
		\begin{equation}
		u_j^*(k|k),~u_j^*(k+1|k),\ldots,~u_j^*(k+N-1|k)\nonumber
		\end{equation}
		and let the corresponding state trajectory be given by:
		\begin{equation}
		x_j^*(k|k),~x_j^*(k+1|k),\ldots,~x_j^*(k+N-1|k),~x_j^*(k+N|k).\nonumber
		\end{equation}
		Then we construct a feasible solution for time instant $k+1$ as in Theorem \ref{feasibility}:
		\begin{eqnarray}
		u_j^*(k+1|k),~u_j^*(k+2|k),\ldots,~u_j^*(k+N-1|k),~u_{j-1}(k+N)\nonumber
		\end{eqnarray}
		and the corresponding state trajectory
		\begin{equation}
		x_j^*(k+1|k),~x_j^*(k+2|k),\ldots,~x_j^*(k+N-1|k),~x_{j-1}(k+N),~x_{j-1}(k+N+1).\nonumber
		\end{equation}
		
		This trajectory may not be the optimal one of Problem 1 so
		\begin{eqnarray}
		V_j^*(x_j(k+1))\le\sum_{i=k+1}^{k+N-1}l(x_j^*(i|k),u_j^*(i|k))+l(x_{j-1}(k+N),u_{j-1}(k+N)).\nonumber
		\end{eqnarray} 
		
		Then one has that
		
		\begin{eqnarray}
		&&V_j^*(x_j(k+1))-V_j^*(x_j(k))\nonumber\\
		&\le&\sum_{i=k+1}^{k+N-1}l(x_j^*(i|k),u_j^*(i|k))+l(x_{j-1}(k+N),u_{j-1}(k+N))\nonumber\\
		&&-\sum_{i=k}^{k+N-1}l(x_j^*(i|k),u_j^*(i|k))\nonumber\\
		&=&-l(x_j(k),u_j(k))+l(x_{j-1}(k+N),u_{j-1}(k+N)).\nonumber
		\end{eqnarray}
		
		Taking average on both sides leads to that 
		\begin{eqnarray}
		&&\frac{1}{T}\sum_{k=0}^{T-1}[V_j^*(x_j(k+1))-V_j^*(x_j(k))]\nonumber\\
		&\le&-\frac{1}{T}\sum_{k=0}^{T-1}l(x_j(k),u_j(k))+\frac{1}{T}\sum_{k=0}^{T-1}l(x_{j-1}(k+N),u_{j-1}(k+N)).\nonumber
		\end{eqnarray}
		
		By Theorem \ref{feasibility}, the trajectory $x^*_j(i|k),~i=k,\ldots,k+N$ evolves in $\mathbb{X}$ for all $j,~k\in\mathbb{N}$. In view of Assumption \ref{continuity}, $\mathbb{X}$ and $\mathbb{U}$ are compact, hence $V_j^*(x_j(k))$ is bounded for any $j,~k\in\mathbb{N}$. Consequently, as $T\to\infty$, $\frac{1}{T}\sum_{k=0}^{T-1}[V_j^*(x_j(k+1))-V_j^*(x_j(k))]\to0$. On the other hand,
		\begin{eqnarray} &&\liminf_{T\to\infty}[-\frac{1}{T}\sum_{k=0}^{T-1}l(x_j(k),u_j(k))+\frac{1}{T}\sum_{k=0}^{T-1}l(x_{j-1}(k+N),u_{j-1}(k+N))]\nonumber\\
		&\le&-\limsup_{T\to\infty}\frac{1}{T}\sum_{k=0}^{T-1}l(x_j(k),u_j(k))+\liminf_{T\to\infty}\frac{1}{T}\sum_{k=0}^{T-1}l(x_{j-1}(k+N),u_{j-1}(k+N))\nonumber\\
		&=&-\bar{S}_j+\underline{S}_{j-1}.\nonumber 
		\end{eqnarray}
		Combining the above inequalities one can get that $\bar{S}_j\le\underline{S}_{j-1}$.
	\end{proof}
	
	\subsection{Convergence to a steady state}
	
	Now we consider the case when the initial feasible state and control sequence converges to a steady state. In this case, we assume that there exists $(x_s,u_s)$ satisfying that $x_s=f(x_s,u_s)$ and $\lim_{k\to\infty}x_0(k)=x_s$ and $\lim_{k\to\infty}u_0(k)=u_s$. Such a trajectory could be constructed by a known feasible feedback controller $u=\kappa(x)$. Then the proposed algorithm can be used to search for trajectories with better performance.
	
	We mainly discuss two properties of the proposed algorithm. The first one is that for the given initial feasible sequence, will the convergence property of the sequence be retained by running the iterative learning algorithm? The second one is that if the iteration converges, i.e., $\lim_{j\to\infty}x_j(k)$ and $\lim_{j\to\infty}u_j(k)$ exist for all $k\in\mathbb{N}$, what is this limit state and control sequence?
	
	For the convergence of the steady state, we make use of the following definition, which is standard in the economic MPC literature; see, e.g. \cite{Angeli12}.
	
	\begin{definition}
		System \eqref{sys} is dissipative with respect to a supply rate $s:~\mathbb{X}\times \mathbb{U}\to\mathbb{R}$ if there exists a continuous function $\lambda:\mathbb{X}\to\mathbb{R}$ such that:
		\begin{equation}
		\lambda(f(x,u))-\lambda(x)\le s(x,u)\nonumber
		\end{equation}
		for all $x\in\mathbb{X}$, $u\in\mathbb{U}$.
	\end{definition}
	
	This definition is slightly stronger than the one in \cite{Angeli12}. In \cite{Angeli12}, $\lambda(\cdot)$ is not required to be continuous.
	
	Define the rotated stage cost $L(x,u)=l(x,u)-\lambda(f(x,u))+\lambda(x)$. Then we introduce the auxiliary optimization problem:
	
	\textbf{Problem~2}
	\begin{equation}
	\min_{u_j(k|k),\ldots,u_j(k+N-1|k)}\sum_{i=k}^{k+N-1}L(x_j(i|k),u_j(i|k))\nonumber
	\end{equation}
	subject to
	\begin{eqnarray}
	x_j(i+1|k)&=&f(x_j(i|k),u_j(i|k)),\nonumber\\
	x_j(i|k)&\in&\mathbb{X},\nonumber\\
	u_j(i|k)&\in&\mathbb{U},\nonumber\\
	x_j(k+N|k)&=&x_{j-1}(k+N),\nonumber\\
	x_j(k|k)&=&x_j(k),~i=k,\ldots,k+N-1.\nonumber
	\end{eqnarray}
	Denote $\tilde{V}^*_j(x_j(k))=\sum_{i=k}^{k+N-1}L(x^*_j(i|k),u^*_j(i|k))$, where $u^*_j(i|k)$ and $x^*_j(i|k),~j=k,\ldots,k+N-1$ are respectively the optimal control and state sequences associated with $\tilde{V}_j^*(x_j(k))$.
	
	
\begin{assumption}\label{dissipative}
				System \eqref{sys} is dissipative with respect to the supply rate:
				\begin{equation}
				s(x,u)=l(x,u)-l(x_s,u_s).\nonumber
				\end{equation}
	\end{assumption}
	
	\begin{remark}
		By the definition of dissipative property, the following holds for any $x\in\mathbb{X}$, $u\in\mathbb{U}$
		\begin{equation}
		l(x_s,u_s)\le l(x,u)-\lambda(f(x,u))+\lambda(x).\nonumber
		\end{equation}
		
		Then,  
		\begin{equation}
		l(x_s,u_s)\le \min_{x,u}(l(x,u)-\lambda(f(x,u))+\lambda(x)).\nonumber
		\end{equation}
		
		On the other hand,
		\begin{equation}
		\min_{x,u}(l(x,u)-\lambda(f(x,u))+\lambda(x))\le l(x_s,u_s)-\lambda(f(x_s,u_s))+\lambda(x_s)=l(x_s,u_s).\nonumber
		\end{equation}
		
		Therefore, $(x_s,u_s)$ minimizes $l(x,u)-\lambda(f(x,u))+\lambda(x)$.
	\end{remark}
	
	Since $l(x_s,u_s)$ is the minimal value of  $L(x,u)$ and $l(x_s,u_s)$ is bounded, if $l(x_s,u_s)\neq 0$, we can redefine $l(x,u)$ by subtracting $l(x_s,u_s)$ from the original one. As a result, we can assume that $l(x_s,u_s)=0$ and $L(x,u)\ge 0$ in the sequel without loss of generality. 
	
	\begin{assumption}\label{unique}
		$(x_s,u_s)$ is the unique minimizer of $L(x,u)$.
	\end{assumption}
	
	\begin{lemma}
		For a given $x(k)$, Problem 1 and 2 have the same feasibility, i.e. if Problem 1 is feasible, then Problem 2 is also feasible and vice versa. If the problems are feasible, they have the same optimal solution(s).
	\end{lemma}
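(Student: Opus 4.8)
The plan is to prove the two assertions separately, beginning with the claim that Problem 1 and Problem 2 share the same feasibility, and then that they share the same optimizers. The first part is immediate: the constraint sets of Problem 1 and Problem 2 are \emph{identical} — both impose the dynamics $x_j(i+1|k)=f(x_j(i|k),u_j(i|k))$, the state and input constraints $x_j(i|k)\in\mathbb{X}$ and $u_j(i|k)\in\mathbb{U}$, the terminal equality constraint $x_j(k+N|k)=x_{j-1}(k+N)$, and the initial condition $x_j(k|k)=x_j(k)$. Since the two problems differ only in their objective functions and not in a single constraint, a control sequence is feasible for Problem 1 if and only if it is feasible for Problem 2. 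Hence their feasibility is equivalent, and by Theorem \ref{feasibility} both are feasible for every $j$ and every $k$ whenever an initial feasible sequence exists.

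For the second part, the key observation is that over any feasible trajectory the rotated stage cost telescopes. Writing $L(x,u)=l(x,u)-\lambda(f(x,u))+\lambda(x)$ and summing over the horizon, the terms $-\lambda(f(x_j(i|k),u_j(i|k)))=-\lambda(x_j(i+1|k))$ cancel against the $\lambda(x_j(i+1|k))$ contributed at the next index, so that
\begin{equation}
\sum_{i=k}^{k+N-1}L(x_j(i|k),u_j(i|k))=\sum_{i=k}^{k+N-1}l(x_j(i|k),u_j(i|k))-\lambda(x_j(k+N|k))+\lambda(x_j(k|k)).\nonumber
\end{equation}
The plan is to argue that the two correction terms are \emph{constant across all feasible candidates}: $x_j(k|k)=x_j(k)$ is fixed as the current state (the initial condition of both problems), and $x_j(k+N|k)=x_{j-1}(k+N)$ is pinned down by the terminal equality constraint. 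Therefore $\lambda(x_j(k|k))$ and $\lambda(x_j(k+N|k))$ take the same value for \emph{every} feasible control sequence, and the two objectives differ by an additive constant independent of the decision variables.

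I would then conclude that minimizing $\sum L$ and minimizing $\sum l$ over the common feasible set yield precisely the same set of minimizers, and that the optimal values are related by $\tilde{V}^*_j(x_j(k))=V^*_j(x_j(k))-\lambda(x_{j-1}(k+N))+\lambda(x_j(k))$. Because the feasible set is unchanged and the objectives differ only by this constant, any minimizer of one is a minimizer of the other, which establishes the claim that the optimal solution set coincides.

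The main subtlety — really the only point requiring care — is the dependence of the correction terms on the decision variables. One must verify that $\lambda$ is evaluated only at the two \emph{fixed} endpoints $x_j(k|k)$ and $x_j(k+N|k)$ after telescoping, and that these endpoints are constrained to fixed values rather than being free. This is exactly what the initial condition and the terminal equality constraint \eqref{terminal} guarantee; it is precisely because the terminal constraint is an \emph{equality} (fixing $x_j(k+N|k)=x_{j-1}(k+N)$) rather than an inequality that the endpoint value $\lambda(x_j(k+N|k))$ is constant and can be pulled out of the optimization. Had a terminal inequality or terminal region been used instead, this argument would fail and the minimizer sets would not in general coincide.
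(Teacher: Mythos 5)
Your proposal is correct and follows essentially the same route as the paper's proof: both observe that the two problems share identical constraint sets (hence identical feasibility), and both telescope the rotated cost $L(x,u)=l(x,u)-\lambda(f(x,u))+\lambda(x)$ so the objectives differ only by the constant $\lambda(x_j(k))-\lambda(x_{j-1}(k+N))$, fixed by the initial condition and the terminal equality constraint \eqref{terminal}. Your closing remark that the argument hinges on the terminal constraint being an \emph{equality} is a fair and accurate observation, though the paper does not state it explicitly.
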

	
	\begin{proof}
		Firstly, note that Problem 1 and 2 have the same constraints. Therefore their feasibility is the same.
		
		Next, note that 
		\begin{eqnarray}
		&&\sum_{i=k}^{k+N-1}L(x_j(i|k),u_j(i|k))\nonumber\\
		&=&\sum_{i=k}^{k+N-1}l(x_j(i|k),u_j(i|k))-\lambda(f(x_j(i|k),u_j(i|k)))+\lambda(x_j(i|k))\nonumber\\
		&=&\sum_{i=k}^{k+N-1}l(x_j(i|k),u_j(i|k))+\lambda(x_j(k|k))-\lambda(x_j(k+N|k))\nonumber\\
		&=&\sum_{i=k}^{k+N-1}l(x_j(i|k),u_j(i|k))+\lambda(x_j(k))-\lambda(x_{j-1}(k+N))\nonumber
		\end{eqnarray}
		and $\lambda(x_j(k))-\lambda(x_{j-1}(k+N))$ is a constant, since at time instant $k$ of the $j$-th iteration, $x_j(k)$ and $x_{j-1}(k+N)$ are both known. Consequently, the objective functions of Problem 1 and 2 only differ by a constant, which implies that they have the same optimal solution(s).
	\end{proof}
	
	%
	\begin{lemma}\label{mean}
		For a non-negative sequence $\{a_n\},~n\in\mathbb{N}$, if $\lim_{N\to\infty}\frac{\sum_{i=0}^{N-1}a_i}{N}=0$, then there exist an infinite subsequence $\{a_{k_n}\}$ such that $\lim_{n\to\infty}a_{k_n}=0$.
	\end{lemma}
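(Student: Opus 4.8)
The plan is to recast the conclusion in terms of the limit inferior of the sequence. I would first observe that, for a non-negative sequence, the existence of a subsequence $\{a_{k_n}\}$ with $a_{k_n}\to 0$ is equivalent to $\liminf_{n\to\infty}a_n=0$. Indeed, if such a subsequence exists, then $\liminf_{n\to\infty}a_n\le\lim_{n\to\infty}a_{k_n}=0$, while $a_n\ge 0$ forces $\liminf_{n\to\infty}a_n\ge 0$, so equality holds; conversely, the limit inferior is always attained along some subsequence, which would serve as the desired $\{a_{k_n}\}$. Hence it suffices to establish that $\liminf_{n\to\infty}a_n=0$.

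To prove this, I would argue by contradiction. Suppose instead that $\liminf_{n\to\infty}a_n=c>0$ (allowing the case $c=+\infty$). By the definition of the limit inferior, choosing $\varepsilon=c/2$ (or any fixed positive constant when $c=+\infty$), there exists an index $M$ such that $a_n\ge c/2$ for every $n\ge M$.

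Next I would bound the Cesaro average from below. Since all terms are non-negative, for every $N>M$ one has
\[
\frac{1}{N}\sum_{i=0}^{N-1}a_i \ge \frac{1}{N}\sum_{i=M}^{N-1}a_i \ge \frac{N-M}{N}\cdot\frac{c}{2},
\]
where the first inequality simply discards the non-negative initial terms $a_0,\dots,a_{M-1}$. Letting $N\to\infty$, the right-hand side tends to $c/2>0$, so the running averages stay bounded away from zero. This contradicts the hypothesis $\lim_{N\to\infty}\frac{1}{N}\sum_{i=0}^{N-1}a_i=0$. Therefore $\liminf_{n\to\infty}a_n=0$, and the equivalence established above produces the required subsequence.

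The argument is entirely elementary, and I do not anticipate a genuine obstacle. The only points requiring a little care are the reduction step linking the subsequence statement to the limit inferior and the bookkeeping for the degenerate case $c=+\infty$; the essential mechanism is simply that a non-negative sequence which is eventually bounded away from zero forces its running average to remain bounded away from zero, contradicting the vanishing of the mean.
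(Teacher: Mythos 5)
Your proof is correct and follows essentially the same route as the paper: a contradiction argument showing that if no such subsequence exists, the sequence is eventually bounded below by a positive constant, which forces the Ces\`aro averages to stay bounded away from zero. The only difference is cosmetic --- you make explicit, via the $\liminf$ characterization and the $\frac{N-M}{N}\cdot\frac{c}{2}$ tail estimate, two small steps that the paper's proof states tersely.
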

	
	\begin{proof}
		We prove it by contradiction. Suppose that such an infinite subsequence does not exist. Then there exists an integer $N$ and positive constant $\epsilon$ such that $\forall n\ge N$, $a_n\ge\epsilon$. However, in this case, $\lim_{N\to\infty}\frac{\sum_{i=0}^{N-1}a_i}{N}\ge\epsilon>0$, which contradicts the fact that $\lim_{N\to\infty}\frac{\sum_{i=0}^{N-1}a_i}{N}=0$.
	\end{proof}
	
	\begin{lemma}\label{convergence}
				If Assumption \ref{continuity}, \ref{dissipative} and \ref{unique} are satisfied,
				$\lim_{k\to\infty}(x_j(k),u_j(k))=(x_s,u_s)$ and $(x_s,u_s)$ is the unique minimizer of $L(x,u)$, then there exists a subsequence of $x_{j+1}(k)$ and $u_{j+1}(k)$ such that 
				\begin{equation}
				\lim_{n\to\infty}(x_{j+1}(k_n),u_{j+1}(k_n))=(x_s,u_s).\nonumber
				\end{equation}
	\end{lemma}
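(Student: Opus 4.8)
The plan is to mirror the telescoping argument of Theorem \ref{average}, but to carry it out with the rotated stage cost $L$ rather than $l$, so that the non-negativity $L\ge 0$ can be exploited. Since Problems 1 and 2 share the same optimal solutions, the closed-loop trajectory $(x_{j+1}(k),u_{j+1}(k))$ generated by the MPC law is equally generated by Problem 2, so I work with the rotated value function $\tilde{V}^*_{j+1}$. Reusing the shifted candidate from the feasibility proof (appending $u_j(k+N)$ at the tail, admissible because $x^*_{j+1}(k+N|k)=x_j(k+N)$), suboptimality gives the one-step decrease
\begin{equation}
\tilde{V}^*_{j+1}(x_{j+1}(k+1))-\tilde{V}^*_{j+1}(x_{j+1}(k))\le -L(x_{j+1}(k),u_{j+1}(k))+L(x_j(k+N),u_j(k+N)).\nonumber
\end{equation}

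Next I would sum over $k=0,\ldots,T-1$, telescope the left-hand side, drop the term $-\tilde{V}^*_{j+1}(x_{j+1}(T))\le 0$ using $L\ge 0$, and divide by $T$ to get
\begin{equation}
\frac{1}{T}\sum_{k=0}^{T-1}L(x_{j+1}(k),u_{j+1}(k))\le \frac{\tilde{V}^*_{j+1}(x_{j+1}(0))}{T}+\frac{1}{T}\sum_{k=0}^{T-1}L(x_j(k+N),u_j(k+N)).\nonumber
\end{equation}
The first term on the right vanishes as $T\to\infty$ because $\tilde{V}^*_{j+1}(x_{j+1}(0))$ is a finite sum of $N$ continuous stage costs over the compact set $\mathbb{X}\times\mathbb{U}$. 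For the second term, the hypothesis $\lim_{k\to\infty}(x_j(k),u_j(k))=(x_s,u_s)$ together with continuity of $L$ and $L(x_s,u_s)=0$ yields $L(x_j(k+N),u_j(k+N))\to 0$, whence its Cesàro average also tends to $0$. Since $L\ge 0$, this forces $\lim_{T\to\infty}\frac{1}{T}\sum_{k=0}^{T-1}L(x_{j+1}(k),u_{j+1}(k))=0$.

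With the Cesàro average of the non-negative sequence $a_k=L(x_{j+1}(k),u_{j+1}(k))$ driven to zero, Lemma \ref{mean} supplies a subsequence $\{k_n\}$ with $L(x_{j+1}(k_n),u_{j+1}(k_n))\to 0$. To finish, I would upgrade this value-level statement to state-level convergence: the points $(x_{j+1}(k_n),u_{j+1}(k_n))$ lie in the compact set $\mathbb{X}\times\mathbb{U}$, so any convergent sub-subsequence has a limit $(x^\star,u^\star)$ with $L(x^\star,u^\star)=0$ by continuity, and Assumption \ref{unique} (uniqueness of the minimizer, with $L(x_s,u_s)=0$) identifies $(x^\star,u^\star)=(x_s,u_s)$. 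Passing to and relabeling that sub-subsequence gives $\lim_{n\to\infty}(x_{j+1}(k_n),u_{j+1}(k_n))=(x_s,u_s)$.

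I expect the main obstacle to be precisely this last step—translating $L\to 0$ into the convergence $(x_{j+1}(k_n),u_{j+1}(k_n))\to(x_s,u_s)$. This is where Assumption \ref{unique} and compactness are indispensable: without a unique minimizer the subsequence could cluster at several distinct zeros of $L$, so only the uniqueness assumption pins down the limit. A secondary point to check is that continuity of $L$ (hence the vanishing of the $j$-th iteration's averaged cost) is legitimate, which relies on continuity of $f$ and of $\lambda$ in addition to Assumption \ref{continuity}.
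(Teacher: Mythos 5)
Your proof is correct and follows essentially the same route as the paper's: the shifted-candidate one-step decrease for the rotated value function $\tilde{V}^*_{j+1}$, Ces\`aro averaging to conclude $\lim_{T\to\infty}\frac{1}{T}\sum_{k=0}^{T-1}L(x_{j+1}(k),u_{j+1}(k))=0$, Lemma \ref{mean} to extract a subsequence with vanishing rotated cost, and uniqueness of the minimizer to identify the limit. Your two deviations are cosmetic improvements rather than a different route: dropping $-\tilde{V}^*_{j+1}(x_{j+1}(T))$ via $L\ge 0$ is equivalent to the paper's boundedness bound, and your explicit compactness/sub-subsequence argument merely spells out the step the paper compresses into ``by the uniqueness of $(x_s,u_s)$.''
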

	
	\begin{proof}
		Since Problem 1 and 2 are equivalent, we study the proposed iterative learning algorithm induced by Problem 2. By the continuity of $L(x,u)$, $L(x,u)$ is upper bounded in $\mathbb{X}\times\mathbb{U}$. 
		
		Following the same argument in the proof of Theorem \ref{average}, one can write that
		\begin{equation}
		\tilde{V}^*_{j+1}(x_{j+1}(k+1))-\tilde{V}^*_{j+1}(x_{j+1}(k))\le L(x_{j}(k+N),u_{j}(k+N))-L(x_{j+1}(k),u_{j+1}(k)).\nonumber
		\end{equation}
		Taking average on both sides leads to that 
		\begin{equation}
		\frac{1}{T}\sum_{k=0}^{T-1}L(x_{j+1}(k),u_{j+1}(k))\le \frac{1}{T}\sum_{k=0}^{T-1}L(x_{j}(k+N),u_{j}(k+N))+\frac{\tilde{V}^*_{j+1}(x_{j+1}(0))-\tilde{V}^*_{j+1}(x_{j+1}(T))}{T}.\nonumber
		\end{equation}
		By recursive feasibility of Problem 2 and the continuity of $L(x,u)$, we know that both ${V}^*_{j+1}(x_{j+1}(0))$ and $\tilde{V}^*_{j+1}(x_{j+1}(T))$ are bounded.
		
		Since $\lim_{k\to\infty}x_{j}(k)=x_s$ and $\lim_{k\to\infty}u_{j}(k)=u_s$, $L(x_{j}(k),u_{j}(k))\to 0$ as $k\to\infty$. Therefore, we have $\lim_{T\to\infty}\frac{1}{T}\sum_{k=0}^{T-1}L(x_{j}(k+N),u_{j}(k+N))=0$ and $\lim_{T\to\infty}\frac{1}{T}\sum_{k=0}^{T-1}L(x_{j+1}(k),u_{j+1}(k))=0$. Then by Lemma \ref{mean} there exists a subsequence $L(x_{j+1}(k_n),u_{j+1}(k_n))\to 0$ as $n\to\infty$. By the uniqueness of $(x_s,u_s)$, one has that $\lim_{n\to\infty}x_{j+1}(k_n)=x_s$ and $\lim_{n\to\infty}u_{j+1}(k_n)=u_s$.
			\end{proof}


	Then we prove that the closed-loop performance of each iteration is not worse than that of the previous one. To this end, we assume that $\lim_{T\to\infty}\sum_{k=0}^{T}L(x_0(k),u_0(k))<\infty$. Note that $\lim_{T\to\infty}\sum_{k=0}^{T}L(x_0(k),u_0(k))=\lim_{T\to\infty}\sum_{k=0}^{T}l(x_0(k),u_0(k))+\lambda(x(0))-\lambda(x_s)$. Thus, $\lim_{T\to\infty}\sum_{k=0}^{T}l(x_0(k),u_0(k))<\infty$. Also note that if $\lim_{T\to\infty}\sum_{k=0}^{T}L(x_0(k),u_0(k))=\infty$ but $\limsup_{T\to\infty}\frac{1}{T}\sum_{k=0}^{T-1}L(x_0(k),u_0(k))$ and $\liminf_{T\to\infty}\frac{1}{T}\sum_{k=0}^{T-1}L(x_0(k),u_0(k))$ exist, then Theorem \ref{average} can be applied directly. Denote $J_j=\lim_{T\to\infty}\sum_{k=0}^{T}l(x_j(k),u_j(k))$ and\\ $\tilde{J}_j=\lim_{T\to\infty}\sum_{k=0}^{T}L(x_j(k),u_j(k))$.
	
	\begin{lemma}\label{convergeperformance}
		If Assumptions \ref{continuity} and \ref{dissipative} are satisfied, $\lim_{k\to\infty}(x_j(k),u_j(k))=(x_s,u_s)$ and there exists a subsequence of $x_{j+1}(k)$ and $u_{j+1}(k)$ such that 
		\begin{equation}
		\lim_{n\to\infty}(x_{j+1}(k_n),u_{j+1}(k_n))=(x_s,u_s),\nonumber
		\end{equation}
		and $J_j<\infty$, then $J_{j+1}\le J_j$.
	\end{lemma}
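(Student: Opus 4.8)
The plan is to carry out the entire argument in terms of the rotated stage cost $L(x,u)=l(x,u)-\lambda(f(x,u))+\lambda(x)$ and the equivalent Problem 2, exploiting that the preceding lemma shows Problems 1 and 2 share optimizers and that Assumption \ref{dissipative}, after the normalization $l(x_s,u_s)=0$, gives $L\ge0$ with equality at $(x_s,u_s)$. Nonnegativity of $L$ is the crucial structural fact: it makes the accumulated rotated cost a monotone (hence convergent) partial sum, and it fixes the sign needed to discard the terminal value-function term. So I would first reduce the claim $J_{j+1}\le J_j$ to the rotated statement $\tilde J_{j+1}\le\tilde J_j$, prove the latter, and then translate back.

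The core estimate reuses the one-step inequality already derived in the proof of Lemma \ref{convergence},
\begin{equation}
\tilde V^*_{j+1}(x_{j+1}(k+1))-\tilde V^*_{j+1}(x_{j+1}(k))\le L(x_{j}(k+N),u_{j}(k+N))-L(x_{j+1}(k),u_{j+1}(k)),\nonumber
\end{equation}
which follows from the shifted candidate of Theorem \ref{feasibility}. Summing over $k=0,\dots,T-1$ telescopes the left side, and since $L\ge0$ forces $\tilde V^*_{j+1}(x_{j+1}(T))\ge0$, I can drop that term to get
\begin{equation}
\sum_{k=0}^{T-1}L(x_{j+1}(k),u_{j+1}(k))\le\sum_{k=0}^{T-1}L(x_{j}(k+N),u_{j}(k+N))+\tilde V^*_{j+1}(x_{j+1}(0)).\nonumber
\end{equation}
Letting $T\to\infty$, the right-hand sum tends to $\tilde J_j-\sum_{k=0}^{N-1}L(x_j(k),u_j(k))$, which is finite because $J_j<\infty$ and $\lambda$ is continuous on the compact set $\mathbb{X}$, so that $\tilde J_j=J_j+\lambda(x^0)-\lambda(x_s)<\infty$. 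It remains to bound $\tilde V^*_{j+1}(x_{j+1}(0))$: since $x_{j+1}(0)=x^0=x_j(0)$ and the terminal constraint at time $0$ reads $x(N)=x_j(N)$, the first $N$ steps $x_j(0),\dots,x_j(N)$ with inputs $u_j(0),\dots,u_j(N-1)$ form a feasible candidate, giving $\tilde V^*_{j+1}(x_{j+1}(0))\le\sum_{k=0}^{N-1}L(x_j(k),u_j(k))$. Substituting, the two length-$N$ head sums cancel and I conclude $\tilde J_{j+1}\le\tilde J_j$.

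The final step is the conversion back to $J$, and this is where I expect the real difficulty. Telescoping the definition yields $\sum_{k=0}^{T}l(x_{j+1}(k),u_{j+1}(k))=\sum_{k=0}^{T}L(x_{j+1}(k),u_{j+1}(k))+\lambda(x_{j+1}(T+1))-\lambda(x^0)$; the rotated partial sum converges to $\tilde J_{j+1}$ by monotonicity, so everything hinges on the boundary term $\lambda(x_{j+1}(T+1))$. Here I cannot invoke full convergence of $x_{j+1}(k)$, since Assumption \ref{unique} is deliberately not imposed in this lemma; instead I would evaluate the limit along the subsequence $k_n$ furnished by the hypothesis. Because $(x_{j+1}(k_n),u_{j+1}(k_n))\to(x_s,u_s)$ and $\lambda$ is continuous, $\lambda(x_{j+1}(k_n))\to\lambda(x_s)$, so the accumulated cost is identified as $J_{j+1}=\tilde J_{j+1}+\lambda(x_s)-\lambda(x^0)$. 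Combining this with the matching identity $\tilde J_j=J_j+\lambda(x^0)-\lambda(x_s)$ for the fully convergent iteration $j$, together with $\tilde J_{j+1}\le\tilde J_j$, gives $J_{j+1}\le J_j$. The delicate point to spell out is precisely that the boundary term is controlled only through the subsequence, which is exactly why the subsequential-convergence hypothesis (rather than full convergence) is the right assumption to state.
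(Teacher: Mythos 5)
Your proof is correct, and it follows the paper's skeleton — the same one-step inequality $\tilde{V}^*_{j+1}(x_{j+1}(k+1))-\tilde{V}^*_{j+1}(x_{j+1}(k))\le L(x_{j}(k+N),u_{j}(k+N))-L(x_{j+1}(k),u_{j+1}(k))$, the same telescoping, the same feasibility bound $\tilde{V}^*_{j+1}(x_{j+1}(0))\le\sum_{k=0}^{N-1}L(x_j(k),u_j(k))$, and the same rotation back via $\lambda$ — but it diverges at one substantive point, and your variant is the stronger one. The paper sums along the subsequence $k_n$, keeps the term $-\tilde{V}^*_{j+1}(x_{j+1}(k_n))$ on the right-hand side, and then asserts $\lim_{n\to\infty}\tilde{V}^*_{j+1}(x_{j+1}(k_n))=0$ from $x_{j+1}(k_n)\to x_s$ and $x_j(k_n+N)\to x_s$. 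Only the lower bound $\tilde{V}^*_{j+1}\ge 0$ in that assertion is free (from $L\ge 0$); the upper bound is implicitly a continuity-of-the-value-function or local-controllability claim near $x_s$ that the paper never justifies. You observe that only the free direction is needed: since $\tilde{V}^*_{j+1}(x_{j+1}(T))\ge 0$ it can simply be discarded, and monotonicity of the $L$-partial sums makes $\tilde J_{j+1}$ well defined in $[0,\infty]$ and bounded by the finite right-hand side. This sidesteps the one shaky step in the paper's argument. Correspondingly, you deploy the subsequential-convergence hypothesis only where it is genuinely indispensable — to pin the boundary term $\lambda(x_{j+1}(k_n))\to\lambda(x_s)$ when converting $\tilde J_{j+1}$ back to $J_{j+1}$ — whereas the paper uses the subsequence in the vanishing-value-function step and then applies the identity $\tilde J_{j+1}=J_{j+1}+\lambda(x^0)-\lambda(x_s)$ without comment, even though it requires the full boundary limit. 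One residual wrinkle, shared by both treatments and which you correctly flag: under only subsequential convergence, $J_{j+1}=\lim_{T\to\infty}\sum_{k=0}^{T}l(x_{j+1}(k),u_{j+1}(k))$ need not exist as a full limit (the $L$-sums converge, but $\lambda(x_{j+1}(T+1))$ may oscillate), so your identification of $J_{j+1}$ is really along $k_n$; this is harmless in context, because where the lemma is invoked (Theorem \ref{converge}) Assumption \ref{unique} is in force, $\tilde J_{j+1}<\infty$ then yields $L(x_{j+1}(k),u_{j+1}(k))\to 0$ and hence full convergence of $(x_{j+1}(k),u_{j+1}(k))$ to $(x_s,u_s)$, and the full limit exists after all.
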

	
	\begin{proof}
		Similar to the proof of Lemma \ref{convergence}, we have
		\begin{equation}
		\tilde{V}^*_{j+1}(x_{j+1}(k+1))-\tilde{V}^*_{j+1}(x_{j+1}(k))\le L(x_{j}(k+N),u_{j}(k+N))-L(x_{j+1}(k),u_{j+1}(k)).\nonumber
		\end{equation}
		Taking summation on both sides leads to that
		 
				\begin{equation}
				\sum_{i=0}^{k_n-1}L(x_{j+1}(i),u_{j+1}(i))\le \sum_{i=0}^{k_n-1}L(x_{j}(i+N),u_{j}(i+N))+\tilde{V}^*_{j+1}(x_{j+1}(0))-\tilde{V}^*_{j+1}(x_{j+1}(k_n)).\label{limiteperformance}
				\end{equation}
			
			Since $\lim_{n\to\infty}x_{j+1}(k_n)=x_s$ and $\lim_{k\to\infty}x_j(k)=x_s$, $\lim_{n\to\infty}\tilde{V}^*_{j+1}(x_{j+1}(k_n))=0$. By letting $n\to\infty$, \eqref{limiteperformance} becomes
		\begin{equation}
		\tilde{J}_{j+1}\le \tilde{J}_{j}-\sum_{k=0}^{N-1}L(x_{j}(k),u_{j}(k))+\tilde{V}^*_{j+1}(x_{j+1}(0)).\label{limit}
		\end{equation}
		Note that the right hand side of \eqref{limit} is finite so this inequality is well defined. Consider Problem 2 at $k=0$ and iteration $j+1$. By recursive feasibility of the proposed algorithm, one can observe that $x_{j}(k),~k=0,\ldots,N$ and $u_{j}(k),~k=0,\ldots,N-1$ are feasible state and control sequences for Problem 2 at $k=0$ and iteration $j+1$. Therefore, the cost $\sum_{k=0}^{N-1}L(x_{j}(k),u_{j}(k))$ will not be smaller than the optimal one, which is $\tilde{V}^*_{j+1}(x_{j+1}(0))$. Combining this fact with \eqref{limit}, one obtains that $\tilde{J}_{j+1}\le \tilde{J}_j,~j\in\mathbb{N}$. The proof is completed by noticing that $\tilde{J}_j=J_j+\lambda(x(0))-\lambda(x_s)$.
	\end{proof}
	
			The first main result of this section is summarized in the following theorem.
			
			\begin{theorem}\label{converge}
				If Assumption \ref{continuity}, \ref{dissipative} and \ref{unique} are satisfied, $\lim_{k\to\infty}(x_0(k),u_0(k))=(x_s,u_s)$, and $\lim_{T\to\infty}\sum_{k=0}^{T}L(x_0(k),u_0(k))<\infty$, then
				\begin{equation}
				J_{j+1}\le J_j\nonumber
				\end{equation}
				and 
				\begin{equation}
				\lim_{k\to\infty}(x_{j}(k),u_{j}(k))=(x_s,u_s),~\forall j\in\mathbb{N}.\nonumber
				\end{equation}
			\end{theorem}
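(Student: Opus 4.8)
The plan is to proceed by induction on the iteration index $j$, using Lemma \ref{convergence} and Lemma \ref{convergeperformance} as the two engines of the inductive step and the finiteness of the rotated cost to close the loop. The base case $j=0$ is immediate: convergence $\lim_{k\to\infty}(x_0(k),u_0(k))=(x_s,u_s)$ is assumed, and since $\tilde{J}_0=J_0+\lambda(x^0)-\lambda(x_s)$ and $\tilde{J}_0<\infty$ by hypothesis, we also have $J_0<\infty$. The induction hypothesis at stage $j$ will be the pair of statements $\lim_{k\to\infty}(x_j(k),u_j(k))=(x_s,u_s)$ and $J_j<\infty$.

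For the inductive step, assume the hypothesis holds at stage $j$. First I would invoke Lemma \ref{convergence}: since $(x_j(k),u_j(k))\to(x_s,u_s)$ and $(x_s,u_s)$ is the unique minimizer of $L$ by Assumption \ref{unique}, there is a subsequence with $(x_{j+1}(k_n),u_{j+1}(k_n))\to(x_s,u_s)$. With this subsequence in hand, together with $J_j<\infty$, Lemma \ref{convergeperformance} applies and yields $J_{j+1}\le J_j$. In particular $J_{j+1}<\infty$, so the performance claim and the finiteness part of the induction hypothesis at stage $j+1$ are already established, and the monotone chain $J_{j+1}\le J_j\le\cdots\le J_0<\infty$ follows.

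The main obstacle is to upgrade the subsequential convergence delivered by Lemma \ref{convergence} to convergence of the \emph{full} sequence $(x_{j+1}(k),u_{j+1}(k))$, which is what the theorem requires. Here I would exploit the finiteness just obtained: since $\tilde{J}_{j+1}=J_{j+1}+\lambda(x^0)-\lambda(x_s)<\infty$ and $L\ge 0$, the series $\sum_{k=0}^{\infty}L(x_{j+1}(k),u_{j+1}(k))$ converges, so its general term satisfies $L(x_{j+1}(k),u_{j+1}(k))\to 0$ as $k\to\infty$ along the entire sequence. The key step is then to turn $L\to 0$ into convergence of its argument to $(x_s,u_s)$. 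This follows from compactness and uniqueness: for any open neighborhood $V$ of $(x_s,u_s)$, the set $(\mathbb{X}\times\mathbb{U})\setminus V$ is compact, and because $(x_s,u_s)$ is the unique zero of the nonnegative continuous function $L$ (Assumptions \ref{continuity} and \ref{unique}), $L$ attains a strictly positive minimum $\delta_V>0$ on this set. Hence $L(x_{j+1}(k),u_{j+1}(k))<\delta_V$ forces $(x_{j+1}(k),u_{j+1}(k))\in V$, and since $L\to 0$ this holds for all large $k$; as $V$ is arbitrary, $\lim_{k\to\infty}(x_{j+1}(k),u_{j+1}(k))=(x_s,u_s)$. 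This restores the full induction hypothesis at stage $j+1$ and completes the argument.
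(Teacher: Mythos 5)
Your proposal is correct and follows essentially the same route as the paper's proof: induction on $j$, with Lemma \ref{convergence} supplying the subsequence, Lemma \ref{convergeperformance} giving $J_{j+1}\le J_j<\infty$, and finiteness of $\tilde{J}_{j+1}$ forcing $L(x_{j+1}(k),u_{j+1}(k))\to 0$ and hence full-sequence convergence by uniqueness of the minimizer. The only difference is that you make explicit, via the compactness of $(\mathbb{X}\times\mathbb{U})\setminus V$ and the positive minimum of $L$ there, the step the paper dismisses as ``easy to see,'' which is a welcome addition but not a different argument.
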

			
			\begin{proof}
			It is easy to see that if $\tilde{J}_j<\infty$, then we have $\lim_{k\to\infty}L(x_j(k),u_j(k))=0$ and $\lim_{k\to\infty}(x_j(k),u_j(k))=(x_s,u_s)$ by the uniqueness of $(x_s,u_s)$. Applying Lemma \ref{convergence} to $j=0$, we have that there exists a subsequence of $x_1(k)$ and $u_1(k)$ such that 
			\begin{equation}
			\lim_{n\to\infty}(x_{1}(k_n),u_{1}(k_n))=(x_s,u_s).\nonumber
			\end{equation}
			Then we apply Lemma \ref{convergeperformance} to obtain that $J_{1}\le J_0<\infty$, which implies that $\lim_{k\to\infty}(x_{1}(k),u_{1}(k))=(x_s,u_s).$ The theorem can be concluded by induction.
	\end{proof}
	
	\begin{remark}
		In \cite{Angeli12}, to ensure the stability of the steady state, strictly dissipative assumption, which requires a positive definite function $\rho(x)$ with respect to $(x_s,u_s)$ such that $\lambda(f(x,u))-\lambda(x)\le -\rho(x)+s(x,u)$, is needed. In this paper, we only require a weaker dissipative assumption, if a convergent initial feasible trajectory is available.
	\end{remark}
	
	\begin{definition}\label{recedingopt}
		Given state and control sequences $x(k),~u(k),~k\in\mathbb{N}$. They are $N$-receding-horizon optimal for system \eqref{sys} if $u(i),~i=k,\ldots,k+N-1$ and $x(i),~i=k,\ldots,k+N$ are the optimal solution and the corresponding state sequence of the following problem $\forall k\in\mathbb{N}$:
		
		\textbf{Problem~3}
		\begin{equation}
		\min_{u(k|k),\ldots,u(k+N-1|k)}\sum_{i=k}^{k+N-1}l(x(i|k),u(i|k))\nonumber
		\end{equation}
		subject to
		\begin{eqnarray}
		x(i+1|k)&=&f(x(i|k),u(i|k)),\nonumber\\
		x(i|k)&\in&\mathbb{X},\nonumber\\
		u(i|k)&\in&\mathbb{U},\nonumber\\
		x(k|k)&=&x(k),\nonumber\\
		x(k+N|k)&=&x(k+N).\nonumber
		\end{eqnarray}
	\end{definition}
	
	\begin{corollary}
		$N$-receding-horizon optimality implies $(N-1)$-receding-horizon optimality.
\end{corollary}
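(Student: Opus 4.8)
The plan is to invoke Bellman's principle of optimality and argue by contradiction. Fix an arbitrary time instant $k$ and suppose the given sequence $x(\cdot),u(\cdot)$ is $N$-receding-horizon optimal but that the \emph{same} sequence, viewed over the shorter window, fails $(N-1)$-receding-horizon optimality at $k$. By Definition \ref{recedingopt}, this means the controls $u(k),\ldots,u(k+N-2)$ together with states $x(k),\ldots,x(k+N-1)$ do not solve Problem 3 with horizon $N-1$ at time $k$. Hence there exists a feasible $(N-1)$-step trajectory $\tilde u(k),\ldots,\tilde u(k+N-2)$ with states $\tilde x(k),\ldots,\tilde x(k+N-1)$ obeying the same boundary data $\tilde x(k)=x(k)$ and $\tilde x(k+N-1)=x(k+N-1)$, and achieving a strictly smaller cost $\sum_{i=k}^{k+N-2} l(\tilde x(i),\tilde u(i)) < \sum_{i=k}^{k+N-2} l(x(i),u(i))$.

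Next I would splice this cheaper segment into the longer problem by appending the original terminal control, forming the candidate $\tilde u(k),\ldots,\tilde u(k+N-2),u(k+N-1)$. Its state trajectory agrees with $\tilde x$ up through step $k+N-1$, and because $\tilde x(k+N-1)=x(k+N-1)$, the final propagation produces $f(x(k+N-1),u(k+N-1))=x(k+N)$ by the dynamics \eqref{sys}. Consequently the appended candidate has initial state $x(k)$, terminal state $x(k+N)$, and satisfies the state and input constraints throughout, so it is admissible for Problem 3 with horizon $N$ at time $k$.

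Finally, since the appended stage cost $l(x(k+N-1),u(k+N-1))$ is common to both the candidate and the original $N$-step trajectory, adding it to the strict inequality on the first $N-1$ stages yields a strictly smaller total $N$-horizon cost for the spliced candidate than for the original sequence. This contradicts the assumed $N$-receding-horizon optimality at $k$. As $k$ was arbitrary, the sequence is $(N-1)$-receding-horizon optimal, which proves the corollary.

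No step here is technically deep, as the result is essentially the principle of optimality; the one point requiring care, and the only place an error could creep in, is verifying that the spliced candidate is genuinely admissible for the longer problem. Specifically, one must check that appending $u(k+N-1)$ restores \emph{exactly} the terminal state $x(k+N)$ demanded by the horizon-$N$ terminal constraint. This is guaranteed precisely because the intermediate state $x(k+N-1)$ is held fixed as the terminal value of the shorter problem and the final transition is governed by \eqref{sys}.
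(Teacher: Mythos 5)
Your proof is correct and follows essentially the same route as the paper's: argue by contradiction, take the strictly cheaper feasible $(N-1)$-step trajectory with matching boundary states $\tilde x(k)=x(k)$ and $\tilde x(k+N-1)=x(k+N-1)$, splice on the original final control $u(k+N-1)$ so the dynamics \eqref{sys} restore the terminal state $x(k+N)$, and conclude a strict improvement contradicting $N$-receding-horizon optimality. Your added emphasis on verifying admissibility of the spliced candidate is exactly the delicate point the paper's construction of $\hat x,\hat u$ handles implicitly.
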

		\begin{proof}
			Suppose that $x(k),~u(k),~k\in\mathbb{N}$ are $N$-receding-horizon optimal but not $(N-1)$-receding-horizon optimal. Then there exists some $k\in\mathbb{N}$ such that there exist another feasible trajectory $\tilde{x}(i),~i=k,\ldots,k+N-1$ and associated control sequence $\tilde{u}(i),~i=k,\ldots,k+N-2$ such that $\sum_{i=k}^{k+N-2}l(\tilde{x}(i),\tilde{u}(i))<\sum_{i=k}^{k+N-2}l(x(i),u(i))$ and $\tilde{x}(k)=x(k)$ and $\tilde{x}(k+N-1)=x(k+N-1)$. Now we consider $\hat{x}(i),~i=k,\ldots,k+N$, which is constructed as $\hat{x}(i)=\tilde{x}(i),~i=k,\ldots,k+N-1$ and $\hat{x}(k+N)=x(k+N)$, and $\hat{u}(i),~i=k,\ldots,k+N-1$, which is constructed as $\hat{u}(i)=\tilde{u}(i),~i=k,\ldots,k+N-2$ and $\hat{u}(k+N-1)=u(k+N-1)$. Then one has that $\sum_{i=k}^{k+N-1}l(\hat{x}(i),\hat{u}(i))=\sum_{i=k}^{k+N-2}l(\tilde{x}(i),\tilde{u}(i))+l(x(k+N-1),u(k+N-1))<\sum_{i=k}^{k+N-2}l(x(i),u(i))+l(x(k+N-1),u(k+N-1))=\sum_{i=k}^{k+N-1}l(x(i),u(i))$, which contradicts the fact that $x(k),~u(k),~k\in\mathbb{N}$ are $N$-receding-horizon optimal.
		\end{proof}


			\begin{assumption}\label{uniquesolution}
				The optimal solution of Problem 1 is unique.
			\end{assumption}
			
			Since Problem 1 and 2 are equivalent, the optimal solution of Problem 2 is also unique if Assumption \ref{uniquesolution} is satisfied.
			
			\begin{lemma}\label{eqimplyeq}
				If Assumption \ref{continuity}, \ref{dissipative} and \ref{uniquesolution} are satisfied and there exists some positive integer $N_0$ such that $\tilde{J}_{N_0+1}=\tilde{J}_{N_0}$, then $x_{j+1}(k)=x_j(k)$ and $u_{j+1}(k)=u_j(k),~\forall k\in\mathbb{N},~j\ge N_0$. 
				\end{lemma}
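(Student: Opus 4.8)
The plan is to convert the scalar identity $\tilde{J}_{N_0+1}=\tilde{J}_{N_0}$ into a pointwise equality of trajectories by showing that every inequality used to establish the monotonicity $\tilde{J}_{j+1}\le\tilde{J}_j$ in Lemma \ref{convergeperformance} must in fact be tight, and then invoking Assumption \ref{uniquesolution} to upgrade ``achieves the optimal cost'' into ``equals the optimal solution.'' First I would record the one-step slack. Exactly as in the proof of Lemma \ref{convergeperformance}, taking $j=N_0$, the shifted-and-appended candidate of Theorem \ref{feasibility} is feasible for Problem 2 at time $k+1$, so
\[
\delta_k:=L(x_{N_0}(k+N),u_{N_0}(k+N))-L(x_{N_0+1}(k),u_{N_0+1}(k))-\big[\tilde{V}^*_{N_0+1}(x_{N_0+1}(k+1))-\tilde{V}^*_{N_0+1}(x_{N_0+1}(k))\big]\ge 0.
\]
Summing along the subsequence $k_n$ as in \eqref{limiteperformance} and letting $n\to\infty$ (the boundary term $\tilde{V}^*_{N_0+1}(x_{N_0+1}(k_n))\to 0$ because, in the setting of Theorem \ref{converge}, the iterates converge to $(x_s,u_s)$), the telescoped bound is precisely \eqref{limit}, and the total slack equals
\[
\sum_{k=0}^{\infty}\delta_k=\tilde{V}^*_{N_0+1}(x_{N_0+1}(0))-\sum_{k=0}^{N-1}L(x_{N_0}(k),u_{N_0}(k)).
\]
Since the tail of the $N_0$-th trajectory is feasible for Problem 2 at $k=0$ of iteration $N_0+1$, its cost $\sum_{k=0}^{N-1}L(x_{N_0}(k),u_{N_0}(k))$ is no smaller than the optimum $\tilde{V}^*_{N_0+1}(x_{N_0+1}(0))$, so the right-hand side is $\le 0$; as each $\delta_k\ge 0$, both sides vanish, giving $\delta_k=0$ for every $k$ together with the base-case equality $\tilde{V}^*_{N_0+1}(x_{N_0+1}(0))=\sum_{k=0}^{N-1}L(x_{N_0}(k),u_{N_0}(k))$.

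Next I would run an induction on $k$ to prove, for iteration $N_0+1$, that the optimal solution of Problem 2 at time $k$ is exactly the $N_0$-th tail $\{x_{N_0}(k+i),u_{N_0}(k+i)\}$. The base case $k=0$ follows from the base-case equality above and Assumption \ref{uniquesolution}: the $N_0$-th tail attains the optimum, hence by uniqueness it \emph{is} the optimizer, so $u_{N_0+1}(0)=u_{N_0}(0)$ and $x_{N_0+1}(1)=x_{N_0}(1)$. For the inductive step, the induction hypothesis identifies the candidate built at time $k+1$ (shift the optimizer at time $k$ and append $u_{N_0}(k+N)$) as precisely the $N_0$-th tail $\{x_{N_0}(k+1+i),u_{N_0}(k+1+i)\}$; since $\delta_k=0$ says this candidate attains $\tilde{V}^*_{N_0+1}(x_{N_0+1}(k+1))$, uniqueness again forces it to be the optimizer at time $k+1$, yielding $u_{N_0+1}(k+1)=u_{N_0}(k+1)$ and $x_{N_0+1}(k+2)=x_{N_0}(k+2)$. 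This closes the induction and gives $x_{N_0+1}(k)=x_{N_0}(k)$, $u_{N_0+1}(k)=u_{N_0}(k)$ for all $k\in\mathbb{N}$.

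Finally I would extend to all $j\ge N_0$ by induction on $j$. Once the trajectory of iteration $N_0+1$ coincides with that of iteration $N_0$, the terminal constraints \eqref{terminal} defining every Problem 1 (equivalently Problem 2) of iteration $N_0+2$ use the data $x_{N_0+1}(k+N)=x_{N_0}(k+N)$, i.e.\ they are identical to the problems that generated iteration $N_0+1$; by uniqueness of the optimizer the entire closed-loop trajectory of iteration $N_0+2$ equals that of iteration $N_0+1$, and in particular $\tilde{J}_{N_0+2}=\tilde{J}_{N_0+1}$. Iterating this observation shows all iterations $j\ge N_0$ share a common trajectory, which is the claim. The main obstacle is the second paragraph: squeezing a single scalar equality into tightness of \emph{every} per-step inequality (handled by the nonnegativity of $\delta_k$ and the telescoping identity), and then correctly matching the shifted candidate at each step with the $N_0$-th tail so that Assumption \ref{uniquesolution} can be applied inductively; the convergence $x_{N_0+1}(k_n)\to x_s$ needed to kill the boundary term must also be carried along from Theorem \ref{converge}.
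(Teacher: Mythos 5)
Your proof is correct and essentially mirrors the paper's: both convert the scalar equality $\tilde{J}_{N_0+1}=\tilde{J}_{N_0}$ into per-step tightness of the shift-and-append inequality, use feasibility of the $N_0$-th tail to obtain the reverse bound, invoke Assumption \ref{uniquesolution} to identify each optimizer with that tail by induction on $k$, and then propagate to all $j\ge N_0$ by observing that the optimization problems of successive iterations become identical. The only difference is bookkeeping — you telescope once from $k=0$ and extract all slacks $\delta_k=0$ in one shot, whereas the paper carries the tail-cost equality $\tilde{J}_{N_0+1}(k)=\tilde{J}_{N_0}(k)$ as an induction invariant and re-derives the bound at each $k$ — and you correctly flag that both versions implicitly inherit the vanishing boundary term $\tilde{V}^*_{N_0+1}(x_{N_0+1}(k_n))\to 0$ (hence the convergence setting of Theorem \ref{converge}) from the proof of Lemma \ref{convergeperformance}.
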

				
				\begin{proof}
					Denote $\tilde{J}_j(k)=\lim_{T\to\infty}\sum_{i=k}^{T+k-1}L(x_j(i),u_j(i))$. Suppose that $\tilde{J}_{N_0}(k)=\tilde{J}_{N_0+1}(k)$ and $x_{N_0}(k)=x_{N_0+1}(k)$. Similar to the proof of Lemma \ref{convergeperformance}, one can write that 
					\begin{equation}
					\tilde{J}_{N_0+1}(k)\le \tilde{J}_{N_0}(k)-\sum_{i=k}^{k+N-1}L(x_{N_0}(i),u_{N_0}(i))+\tilde{V}^*_{N_0+1}(x_{N_0+1}(k)).\nonumber
					\end{equation}
					
					Since $\tilde{J}_{N_0+1}(k)=\tilde{J}_{N_0}(k)$, the above implies that $\sum_{i=k}^{k+N-1}L(x_{N_0}(i),u_{N_0}(i))\le\tilde{V}^*_{N_0+1}(x_{N_0+1}(k))$. Since $x_{N_0}(k)=x_{N_0+1}(k)$, $x_{N_0}(i),~i=k,\ldots,k+N$ and $u_{N_0}(i),~i=k,\ldots,k+N-1$ are feasible for Problem 2 at time instant $k$ and the $(N_0+1)$-th iteration. As a result, $\sum_{i=k}^{k+N-1}L(x_{N_0}(i),u_{N_0}(i))\ge\tilde{V}^*_{N_0+1}(x_{N_0+1}(k))$. Consequently, we have $\sum_{i=k}^{k+N-1}L(x_{N_0}(i),u_{N_0}(i))=\tilde{V}^*_{N_0+1}(x_{N_0+1}(k))$ and $x_{N_0}(k+1)=x_{N_0+1}(k+1)$ and $u_{N_0}(k)=u_{N_0+1}(k)$ by Assumption \ref{uniquesolution}. Furthermore, $\tilde{J}_{N_0+1}(k+1)=\tilde{J}_{N_0}(k+1)$ since $\tilde{J}_{N_0+1}(k+1)=\tilde{J}_{N_0+1}(k)-L(x_{N_0+1}(k),u_{N_0+1}(k))$, $\tilde{J}_{N_0}(k+1)=\tilde{J}_{N_0}(k)-L(x_{N_0}(k),u_{N_0}(k))$, $\tilde{J}_{N_0+1}(k)=\tilde{J}_{N_0}(k)$ and $L(x_{N_0+1}(k),u_{N_0+1}(k))=L(x_{N_0}(k),u_{N_0}(k))$. Note that $\tilde{J}_{N_0+1}(0)=\tilde{J}_{N_0}(0)$ and $x_{N_0+1}(0)=x_{N_0}(0)$. By induction one has that $x_{N_0+1}(k)=x_{N_0}(k)$ and $u_{N_0+1}(k)=u_{N_0}(k),~\forall k\in\mathbb{N}.$ Now suppose that $x_{j+1}(k)=x_{j}(k),~\forall k\in\mathbb{N}$ and $x_{j+2}(k_0)=x_{j+1}(k_0)$ for some $k_0\in\mathbb{N}$. Then Problem 2 formulated at time instant $k_0$ of the $(j+2)$-th iteration is the same as the one formulated at time instant $k_0$ of the $(j+1)$-th iteration. Therefore the solutions of both problem are identical and it results that $x_{j+2}(k_0+1)=x_{j+1}(k_0+1)$. The proof is completed by induction and the fact that $x_{j+1}(0)=x_{j}(0),~\forall j\in\mathbb{N}$.
				\end{proof}
			
				Lemma \ref{eqimplyeq} implies that if the performance index converges in finite steps， then the state and control sequences converge as well.
	
Suppose that the limits $\lim_{j\to\infty}x_j(k)$ and $\lim_{j\to\infty}u_j(k)$ exist and we denote $\lim_{j\to\infty}x_j(k)$ and $\lim_{j\to\infty}u_j(k)$ as $x_{\infty}(k)$ and $u_{\infty}(k)$ respectively and $\tilde{V}_{\infty}\triangleq\sum_{k=0}^{\infty}L(x_{\infty}(k),u_{\infty}(k))$, which is finite.
	
	\begin{theorem}\label{convergesuboptimal}
		If Assumption \ref{continuity}, \ref{dissipative} and \ref{uniquesolution} are satisfied, then $x_{\infty}(k)$ and $u_{\infty}(k)$ are $N$-receding-horizon optimal for system \eqref{sys}.
	\end{theorem}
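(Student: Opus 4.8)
The plan is to reduce the claim to the rotated cost and a single ``limiting'' finite-horizon problem, and then run a monotone-gap argument comparing the closed-loop cost of $(x_\infty,u_\infty)$ with the optimal cost. First, by the equivalence of Problems 1 and 2 (their objectives differ only by a constant once $x(k)$ and $x(k+N)$ are fixed), it suffices to prove optimality with respect to $L$. For each $k$ introduce the limiting problem $\mathcal{P}_k$: minimise $\sum_{i=k}^{k+N-1}L(x(i),u(i))$ over admissible trajectories with $x(k)=x_\infty(k)$ and $x(k+N)=x_\infty(k+N)$, and let $W(k)$ denote its optimal value. Since $f$ is continuous and $\mathbb{X},\mathbb{U}$ are compact, the limit pair $(x_\infty,u_\infty)$ is itself admissible, so its window is feasible for $\mathcal{P}_k$ and $C(k):=\sum_{i=k}^{k+N-1}L(x_\infty(i),u_\infty(i))\ge W(k)$. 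The whole proof reduces to upgrading this to equality: by Assumption \ref{uniquesolution} the window is then the unique minimiser of $\mathcal{P}_k$, which---after translating back to $l$---is exactly $N$-receding-horizon optimality at $k$.

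Next I would pass to the limit $j\to\infty$ in the per-step inequality established inside the proof of Lemma \ref{convergence}, namely $\tilde{V}_j^*(x_j(k+1))-\tilde{V}_j^*(x_j(k))\le L(x_{j-1}(k+N),u_{j-1}(k+N))-L(x_j(k),u_j(k))$. The data $(x_j(k),x_{j-1}(k+N))$ of Problem 2 converge to the data $(x_\infty(k),x_\infty(k+N))$ of $\mathcal{P}_k$, so with $L_i:=L(x_\infty(i),u_\infty(i))$ one obtains
\[ W(k+1)-W(k)\le L_{k+N}-L_k. \]
The step that makes this rigorous, and the main obstacle, is the convergence $\tilde{V}_j^*(x_j(k))\to W(k)$: this is continuity of the optimal value of a parametric program in its initial and terminal data across a terminal \emph{equality} constraint. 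I would establish it through Berge's maximum theorem, using continuity of $L$ (Assumption \ref{continuity}), compactness of $\mathbb{X}\times\mathbb{U}$, recursive feasibility (Theorem \ref{feasibility}) to keep the feasible set nonempty at every parameter along the sequence and at the limit, and Assumption \ref{uniquesolution} to single out the minimiser; the delicate ingredient is lower hemicontinuity of the feasible-set correspondence, which is precisely what an equality terminal constraint can spoil for a general nonlinear $f$.

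With the inequality above in hand, a direct telescoping of the closed-loop cost gives $C(k)-C(k+1)=L_k-L_{k+N}$. Subtracting the two relations, the gap $D(k):=C(k)-W(k)\ge 0$ satisfies $D(k+1)\ge D(k)$, i.e. it is nondecreasing in $k$. Because $\tilde{V}_\infty=\sum_{i=0}^{\infty}L_i$ is finite, the block sums satisfy $C(k)\to 0$ as $k\to\infty$, and since $0\le W(k)\le C(k)$ we also have $W(k)\to 0$; hence $D(k)\to 0$. A nonnegative, nondecreasing sequence with limit $0$ must vanish identically, so $D(k)=0$ and therefore $C(k)=W(k)$ for every $k$.

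Finally, for each $k$ the window $(x_\infty(i),u_\infty(i))_{i=k}^{k+N}$ is feasible for $\mathcal{P}_k$ and attains $W(k)$, so by Assumption \ref{uniquesolution} it is the unique minimiser. Translating from $L$ back to $l$---the two window objectives differ only by the constant $\lambda(x_\infty(k))-\lambda(x_\infty(k+N))$---the same window is the unique optimal solution of Problem 3 with $x(k)=x_\infty(k)$ and $x(k+N)=x_\infty(k+N)$. As $k$ is arbitrary, $x_\infty(k),u_\infty(k)$ are $N$-receding-horizon optimal in the sense of Definition \ref{recedingopt}.
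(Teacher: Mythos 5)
Your proposal is correct and follows essentially the same route as the paper's proof: both reduce to the rotated cost, introduce the limiting value $W(k)=\tilde{V}_{\infty,N}(k)$, establish the per-step inequality $W(k+1)-W(k)\le L(x_{\infty}(k+N),u_{\infty}(k+N))-L(x_{\infty}(k),u_{\infty}(k))$, exploit finiteness of $\tilde{V}_{\infty}$ together with $L\ge 0$ to force $W(k)=\sum_{i=k}^{k+N-1}L(x_{\infty}(i),u_{\infty}(i))$, and conclude via Assumption \ref{uniquesolution} and the constant shift $\lambda(x_{\infty}(k))-\lambda(x_{\infty}(k+N))$ back to the original cost. Your monotone-gap finish ($D(k)$ nondecreasing, $D(k)\to 0$) is an algebraic repackaging of the paper's telescoped summation from $k$ to $k+T-1$ with $T\to\infty$, and the value-function-continuity issue you flag (Berge's theorem and lower hemicontinuity of the feasible set across the terminal equality constraint) concerns precisely the step the paper itself asserts without proof via ``similar to the proof of Theorem \ref{converge}'', so you are, if anything, more explicit about that delicate point than the paper is.
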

	
	\begin{proof}
		Denote the optimal costs of Problem 1 and Problem 2 at time instant $k$ and under initial state $x_{\infty}(k)$ as $V_{\infty,N}(k)$ and $\tilde{V}_{\infty,N}(k)$, respectively. Then similar to the proof of Theorem \ref{converge}, one has $\tilde{V}_{\infty,N}(k+1)-\tilde{V}_{\infty,N}(k)\le L(x_{\infty}(k+N),u_{\infty}(k+N))-L(x_{\infty}(k),u_{\infty}(k))$. Taking summation on both sides from $k$ to $k+T-1$ leads to that
		\begin{equation}
		\tilde{V}_{\infty,N}(k+T)-\tilde{V}_{\infty,N}(k)\le \sum_{i=k}^{k+T-1}L(x_{\infty}(i+N),u_{\infty}(i+N))-\sum_{i=k}^{k+T-1}L(x_{\infty}(i),u_{\infty}(i)).\nonumber
		\end{equation}
		The above implies that
		\begin{equation}
		\tilde{V}_{\infty,N}(k)+\sum_{i=k}^{k+T-1}L(x_{\infty}(i+N),u_{\infty}(i+N))\ge\sum_{i=k}^{k+T-1}L(x_{\infty}(i),u_{\infty}(i)).\label{ineq1}
		\end{equation}
		
		Letting $T\to\infty$, we obtain that
		\begin{equation} \lim_{T\to\infty}\sum_{i=k}^{k+T-1}L(x_{\infty}(i+N),u_{\infty}(i+N))=\tilde{V}_{\infty}-\sum_{i=0}^{k+N-1}L(x_{\infty}(i),u_{\infty}(i))\nonumber
		\end{equation} 
		and
		\begin{equation} \lim_{T\to\infty}\sum_{i=k}^{k+T-1}L(x_{\infty}(i),u_{\infty}(i))=\tilde{V}_{\infty}-\sum_{i=0}^{k-1}L(x_{\infty}(i),u_{\infty}(i)).\nonumber\end{equation} 
		Then \eqref{ineq1} becomes
		\begin{equation}
		\tilde{V}_{\infty,N}(k)+\tilde{V}_{\infty}-\sum_{i=0}^{k+N-1}L(x_{\infty}(i),u_{\infty}(i))\ge \tilde{V}_{\infty}-\sum_{i=0}^{k-1}L(x_{\infty}(i),u_{\infty}(i)),\nonumber
		\end{equation}
		which is
		\begin{equation}
		\tilde{V}_{\infty,N}(k)\ge\sum_{i=k}^{k+N-1}L(x_{\infty}(k),u_{\infty}(k)).\label{VL}
		\end{equation}
		
		On the other hand, $u_{\infty}(i),~i=k,\ldots,l+N-1$ is a feasible solution of Problem 2. Therefore, $\tilde{V}_{\infty,N}(k)\le\sum_{i=k}^{k+N-1}L(x_{\infty}(k),u_{\infty}(k))$. So we can conclude that $\tilde{V}_{\infty,N}(k)=\sum_{i=k}^{k+N-1}L(x_{\infty}(k),u_{\infty}(k))$ and the proof is completed by the uniqueness of the optimal solution of Problem 1 and the fact that $\tilde{V}_{\infty,N}(k)-V_{\infty,N}(k)=\sum_{i=k}^{k+N-1}L(x_{\infty}(k),u_{\infty}(k))-\sum_{i=k}^{k+N-1}l(x_{\infty}(k),u_{\infty}(k))=\lambda(x_{\infty}(k))-\lambda(x_{\infty}(k+N))$.
	\end{proof}
	
	\begin{remark}
		Consider the extreme case when $N=1$, time instant $k=0$ and iteration index $j=1$. The initial condition of Problem 1 is fixed as $x_1(0)=x^0$ and the terminal condition is fixed as $x_1(1)=x_0(1)$. Then by solving Problem 1, the controller recovers $u_0(0)$ and it will be applied to the plant. By repeating this procedure, it is not hard to see that, when $N=1$, the closed-loop system always recovers the initial feasible trajectory at each iteration. When $N>1$, there will be free decision variables for the controller to improve performance. As a result, $N$ represents the capacity of innovation the controller has during the learning process. $N=1$ means that the controller can only copy from previous iteration and only when $N>1$, the controller can has the capacity to explore new information to achieve better performance in the learning process.
	\end{remark}
	
	\section{Extension To Cases With Average Constraint}\label{avecon}
	
	In real applications, point-wise in time hard constraints may make the closed-loop performance too conservative. For example, in the control of heating, ventilation and air-conditioning (HVAC) system, one may define a thermal comfort constraint as $[\underline{T},\bar{T}]$, where $\underline{T}$ and $\bar{T}$ represent the lower and upper bounds of indoor temperature. To keep indoor temperature $T(k)$ inside the given bounds for all time is an energy consuming strategy. A lot of literature have discussed the situation when the hard constraint is relaxed as a probabilistic one: $P(T(k)\notin[\underline{T},\bar{T}])\le\epsilon,~\forall k\in\mathbb{N}$ and used stochastic MPC to handle such a constraint \cite{YMASMPC,Korda14,Oldewurtel10}. On the other hand, we may interpret the violation probability $\epsilon$ as the frequency of the event $T(k)\notin[\underline{T},\bar{T}]$ that happens. Then one may define an indicator function $y(k)=f(T(k))$, where $f(x)$ is defined as 
	\begin{equation}
	f(x)=\begin{cases}0,~x\in[\underline{T},\bar{T}],\nonumber\\
	1,~\text{otherwise},\nonumber
	\end{cases}
	\end{equation}
	and require that $\lim_{N\to\infty}\frac{\sum_{i=0}^{N-1}y(i)}{N}\le\epsilon$, which imitates the probabilistic constraint to some extent. 
	
	Consider a bounded sequence $v(k),~k\in\mathbb{N}$. Similar to \cite{Angeli12}, we define the set of asymptotic averages:
	\begin{equation}
	Av[v]=\{\bar{v}|\exists k_n\to\infty:~\lim_{n\to\infty}\frac{\sum_{i=0}^{k_n}v(i)}{k_n+1}=\bar{v}\}.\nonumber
	\end{equation}
	
	Let $\mathbb{Y}\subset\mathbb{R}^p$ be a compact convex set and $y_j$ an auxiliary output variable defined as $y_j(k)=h(x_j(k),u_j(k))$, where $h:\mathbb{X}\times\mathbb{U}\to\mathbb{R}^p$ is continuous. We will discuss on how to ensure that the average constraint $Av[y_j]\subset\mathbb{Y}$ can be satisfied by state and control sequences for any iteration $j\in\mathbb{N}$. 
	
	First, we assume that for the initial feasible state and control sequences $x_0(k)$ and $u_0(k)$, $Av(y_0)\in\mathbb{Y},~k\in\mathbb{N}$ holds. Then we introduce the following optimization problem for the $j$-th iteration at time instant $k$
	
	\textbf{Problem~4}
	\begin{equation}
	\min_{u_j(k|k),\ldots,u_j(k+N-1|k)}\sum_{i=k}^{k+N-1}l(x_j(i|k),u_j(i|k))\nonumber
	\end{equation}
	subject to
	\begin{eqnarray}
	x_j(i+1|k)&=&f(x_j(i|k),u_j(i|k)),\nonumber\\
	x_j(i|k)&\in&\mathbb{X},\nonumber\\
	u_j(i|k)&\in&\mathbb{U},\nonumber\\
	x_j(k+N|k)&=&x_{j-1}(k+N),\label{terminal2}\\
	\sum_{i=k}^{k+N-1}h(x_j(i|k),u_j(i|k))&\in&\mathbb{Y}_{j,k}\label{avc}\\
	x_j(k|k)&=&x_j(k),~i=k,\ldots,k+N-1.\nonumber
	\end{eqnarray}
	The time-varying constraint $\mathbb{Y}_{j,k}$ is used to ensure that the average constraint can be satisfied. $\mathbb{Y}_{j,k}$ is constructed as $\mathbb{Y}_{j,k+1}=\mathbb{Y}_{j,k}\ominus h(x_j(k),u_j(k))\oplus h(x_{j-1}(k+N),u_{j-1}(k+N))$, with $\mathbb{Y}_{j,0}=\mathbb{Y}_{0}\oplus\sum_{i=0}^{N-1}h(x_{j-1}(i),u_{j-1}(i))$ and $\mathbb{Y}_0\subset\mathbb{R}^p$ being an arbitrary compact convex set containing the origin. $\ominus$ and $\oplus$ denote standard set subtraction and addition respectively.
	
	\begin{theorem}
		Under the initial feasible state and control sequences $x_0(k)$ and $u_0(k),~k\in\mathbb{N}$, Problem 4 is feasible for any time instant $k$ of the $j$-th iteration. Moreover, $Av[y_j]$ holds for all $j\in\mathbb{N}$.
	\end{theorem}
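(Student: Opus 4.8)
The plan is to prove the two assertions separately: recursive feasibility of Problem~4, and then the average constraint $Av[y_j]\subset\mathbb{Y}$. Feasibility I would establish by induction on the time index $k$, exactly paralleling Theorem \ref{feasibility}, and the average property by induction on the iteration index $j$. The engine for both is the observation that, since $h(x_j(k),u_j(k))$ and $h(x_{j-1}(k+N),u_{j-1}(k+N))$ are single points, the operations $\ominus$ and $\oplus$ in the update rule for $\mathbb{Y}_{j,k}$ act as pure translations, so each $\mathbb{Y}_{j,k}$ is a translate of the fixed compact set $\mathbb{Y}_0$.

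For feasibility I would first treat $k=0$. The shifted previous trajectory $x_j(i|0)=x_{j-1}(i)$, $u_j(i|0)=u_{j-1}(i)$, $i=0,\ldots,N$, satisfies the dynamics, the state/input constraints and the terminal constraint \eqref{terminal2} exactly as in Theorem \ref{feasibility}; the only new requirement is \eqref{avc}, namely $\sum_{i=0}^{N-1}h(x_{j-1}(i),u_{j-1}(i))\in\mathbb{Y}_{j,0}$. Since $\mathbb{Y}_{j,0}=\mathbb{Y}_0\oplus\sum_{i=0}^{N-1}h(x_{j-1}(i),u_{j-1}(i))$ and $0\in\mathbb{Y}_0$, this holds. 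For the inductive step I would take the same candidate as in Theorem \ref{feasibility}, the tail of the optimal solution appended with $u_{j-1}(k+N)$. Writing its horizon sum as the old optimal horizon sum minus its first term $h(x_j(k),u_j(k))$ plus the appended term $h(x_{j-1}(k+N),u_{j-1}(k+N))$, and combining the fact that the old sum lies in $\mathbb{Y}_{j,k}$ with the translation update rule, the candidate horizon sum lands precisely in $\mathbb{Y}_{j,k+1}$, closing the induction.

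For the average constraint I would first unroll the recursion. Telescoping the previous-iteration terms and using the singleton-translation property gives the closed form $\mathbb{Y}_{j,k}=\mathbb{Y}_0\oplus\big(\sum_{i=0}^{k+N-1}h(x_{j-1}(i),u_{j-1}(i))-\sum_{i=0}^{k-1}h(x_j(i),u_j(i))\big)$. Evaluating \eqref{avc} along the realized closed loop at a large time $T$ then yields a point $w_T\in\mathbb{Y}_0$ with $\sum_{i=T}^{T+N-1}h(x_j^*(i|T),u_j^*(i|T))=w_T+\sum_{i=0}^{T+N-1}h(x_{j-1}(i),u_{j-1}(i))-\sum_{i=0}^{T-1}h(x_j(i),u_j(i))$. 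Solving for the realized partial sum $\sum_{i=0}^{T-1}h(x_j(i),u_j(i))$ and dividing by $T$, the correction terms $w_T$ and $\sum_{i=T}^{T+N-1}h(x_j^*(i|T),u_j^*(i|T))$ are bounded — the first because $\mathbb{Y}_0$ is compact, the second because $h$ is continuous on the compact set $\mathbb{X}\times\mathbb{U}$ — and hence vanish after division by $T$.

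To finish I would take any $\bar y\in Av[y_j]$, choose $k_n\to\infty$ realizing it, set $T=k_n+1$, and pass to the limit in that identity. The factor $\tfrac{T+N}{T}\to1$ turns the surviving term into an average of $y_{j-1}$ over $\{0,\ldots,k_n+N\}$, so $\bar y=\lim_n\tfrac{1}{k_n+N+1}\sum_{i=0}^{k_n+N}h(x_{j-1}(i),u_{j-1}(i))\in Av[y_{j-1}]$. Thus $Av[y_j]\subseteq Av[y_{j-1}]$, and since $Av[y_0]\subset\mathbb{Y}$ is assumed, induction on $j$ gives $Av[y_j]\subset\mathbb{Y}$ for all $j\in\mathbb{N}$. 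The main obstacle is the bookkeeping in this second part: correctly unrolling $\mathbb{Y}_{j,k}$, isolating the realized partial sum, and aligning the index shift by $N$ so that the limit is genuinely an asymptotic average of the previous iteration; the compactness of $\mathbb{Y}_0$ and of $\mathbb{X}\times\mathbb{U}$ is exactly what annihilates the boundary corrections.
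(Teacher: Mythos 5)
Your proposal is correct and follows essentially the same route as the paper: the same shifted candidate at $k=0$ and tail-plus-$u_{j-1}(k+N)$ candidate for the induction (with the singleton $\ominus/\oplus$ update acting as a translation), the same closed-form unrolling of $\mathbb{Y}_{j,k}$ into $\mathbb{Y}_0\oplus\sum_{i=0}^{k+N-1}h(x_{j-1}(i),u_{j-1}(i))\ominus\sum_{i=0}^{k-1}h(x_j(i),u_j(i))$, and the same division-by-$T$ limit argument with the horizon sum and $\mathbb{Y}_0$ terms vanishing by compactness. Your explicit identity with $w_T\in\mathbb{Y}_0$ and the resulting inclusion $Av[y_j]\subseteq Av[y_{j-1}]$ is a slightly sharper rendering of the paper's final step (it makes the existence of the limit of the shifted $y_{j-1}$-averages automatic, where the paper's inclusion chain leaves this implicit), but the underlying argument is the same.
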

	
	\begin{proof}
		Suppose that after the $(j-1)$-th iteration, feasible state and control sequences $x_{j-1}(k)$ and $u_{j-1}(k)$, $k\in\mathbb{N}$ are obtained. Then for the $j$-th iteration, at time instant $0$, the following state and control sequence is feasible:
		\begin{eqnarray}
		x_j(i|0)&=&x_{j-1}(i),~i=0,\ldots,N,\nonumber\\
		u_j(i|0)&=&u_{j-1}(i),~i=0,\ldots,N-1,\nonumber
		\end{eqnarray}
		since $x_j(N|0)=x_{j-1}(N)$ and $\sum_{i=0}^{N-1}h(x_{j-1}(i),u_{j-1}(i))\in\mathbb{Y}_{j,0}$.
		
		Suppose that at time instant $k$ of the $j$-th iteration, Problem 4 is feasible. By the terminal constraint \eqref{terminal2}, we have $x_j^*(k+N|k)=x_{j-1}(k+N)$. Therefore, for time instant $k+1$, we can construct the following candidate solution:
		\begin{eqnarray}
		u_j^*(k+1|k),~u_j^*(k+2|k),\ldots,~u_j^*(k+N-1|k),~u_{j-1}(k+N)\nonumber
		\end{eqnarray}
		and let the corresponding state trajectory be given by
		\begin{equation}
		x_j^*(k+1|k),~x_j^*(k+2|k),\ldots,~x_j^*(k+N-1|k),~x_{j-1}(k+N),~x_{j-1}(k+N+1).\nonumber
		\end{equation}
		To prove that this candidate solution is feasible, we only need to show that constraint \eqref{avc} is satisfied since other constraints are satisfied by Theorem \ref{feasibility}. Note that $\sum_{i=k}^{k+N-1}h(x_j^*(i|k),u_j^*(i|k))\in\mathbb{Y}_{j,k}$. Feasibility directly follows from that
		\begin{eqnarray}
		&&\sum_{i=k+1}^{k+N-1}h(x_j^*(i|k),u_j^*(i|k))+h(x_{j-1}(k+N),u_{j-1}(k+N))\nonumber\\
		&\in&\mathbb{Y}_{j,k}\ominus h(x_j(k),u_j(k))\oplus h(x_{j-1}(k+N),u_{j-1}(k+N))\nonumber\\
		&=&\mathbb{Y}_{j,k+1}\nonumber
		\end{eqnarray}
		
		To show that the average constraint is also satisfied, we first rewrite $\mathbb{Y}_{j,k}$ as 
		\begin{eqnarray}
		\mathbb{Y}_{j,k}&=&\mathbb{Y}_{j,0}\oplus\sum_{i=0}^{k-1}h(x_{j-1}(i+N),u_{j-1}(i+N))\ominus\sum_{i=0}^{k-1}h(x_j(i),u_j(i))\nonumber\\
		&=&\mathbb{Y}_{0}\oplus\sum_{i=0}^{k+N-1}h(x_{j-1}(i),u_{j-1}(i))\ominus\sum_{i=0}^{k-1}h(x_j(i),u_j(i)).\label{rewriteavc}
		\end{eqnarray}
		Combining \eqref{rewriteavc} with \eqref{avc} implies that
		\begin{eqnarray}
		\sum_{i=0}^{k-1}h(x_j(i),u_j(i))+\sum_{i=k}^{k+N-1}h(x_j(i|k),u_j(i|k))\in\mathbb{Y}_{0}\oplus\sum_{i=0}^{k+N-1}h(x_{j-1}(i),u_{j-1}(i)).\nonumber
		\end{eqnarray}
		Note that $\sum_{i=k}^{k+N-1}h(x_j(i|k),u_j(i|k))$ is bounded due to the compactness of $\mathbb{X}$ and $\mathbb{U}$ and the continuity of $h(\cdot,\cdot)$. Then by letting $k$ goes to infinity along any subsequence $k_n$ such that $\lim_{n\to\infty}\frac{\sum_{i=0}^{i}y_j(k_n)}{k_n+1}$ exists, we have
		\begin{eqnarray}
		\lim_{n\to\infty}\frac{\sum_{i=0}^{k_n}y_j(k_n)}{k_n+1}\in\lim_{n\to\infty}\frac{\mathbb{Y}_{0}\oplus\sum_{i=0}^{k_n+N-1}h(x_{j-1}(i),u_{j-1}(i))}{k_n}\in\mathbb{Y},\nonumber
		\end{eqnarray}
		by the feasibility of $x_{j-1}(k)$ and $u_{j-1}(k),~k\in\mathbb{N}$. Then the proof is concluded by induction and the feasibility of $x_{0}(k)$ and $u_{0}(k),~k\in\mathbb{N}$.
	\end{proof}

	\section{Numerical Examples}\label{example}
	
	All the following examples are implemented with ICLOCS \cite{ICLOCS} and solved by IPOPT \cite{ipopt}.
	
	\subsection{Constrained regulator}
	\subsubsection{Linear case}
	
	We first test the proposed iterative learning MPC on the same example as in \cite{Rosolia17}. The system model is given by
	\begin{equation}
	x(k+1)=Ax(k)+Bu(k),\nonumber
	\end{equation}
	with $A=\begin{pmatrix}1&1\\0&1\end{pmatrix}$, $B=\begin{pmatrix}0\\1\end{pmatrix}$ and $x(0)=\begin{pmatrix}-3.95\\-0.05\end{pmatrix}$. The constraints for the system are 
	\begin{equation}
	\begin{pmatrix}-4\\-4\end{pmatrix}\le x(k)\le\begin{pmatrix}4\\4\end{pmatrix},~k\in\mathbb{N}\nonumber
	\end{equation}
	and
	\begin{equation}
	-1\le u(k)\le 1,~k.\in\mathbb{N}\nonumber
	\end{equation}
	
	The performance index to be minimized is $\sum_{k=0}^{\infty}\|x(k)\|_2^2+\|u(k)\|_2^2$. The initial feasible state and control sequence is generated by using an open-loop controller to drive the state to a small neighborhood of the origin and then using a stabilizing linear feedback controller. The prediction horizon is also chosen as 4 as in \cite{Rosolia17}. In Fig.~\ref{Fig:1} we present the evolution of the performance along iterative learning. In Fig.~\ref{Fig:2}, the state trajectories of iterations are shown. It can be observed that it converges after 5 iterations. After 15 iterations, the value of performance index is 49.9163600440, which is the same as the exact optimal one in \cite{Rosolia17} within 10 digits after the decimal point. Though the resulted closed-loop trajectory is only $4$-receding-horizon optimal by Theorem \ref{convergesuboptimal}, the performance compared with the optimal trajectory is almost the same. Within the digit limit of the used numerical solver in this paper, the best closed-loop performance can be achieved is 49.916360043958505 when prediction horizon $N\ge 5$. Note that in this example, at each time instant only a standard quadratic programming is solved while in \cite{Rosolia17} the controller needs to solve a mixed-integer programming, which is significantly more complex than the standard quadratic programming.
	
	\begin{figure}
		\centering
		\includegraphics[scale=0.52]{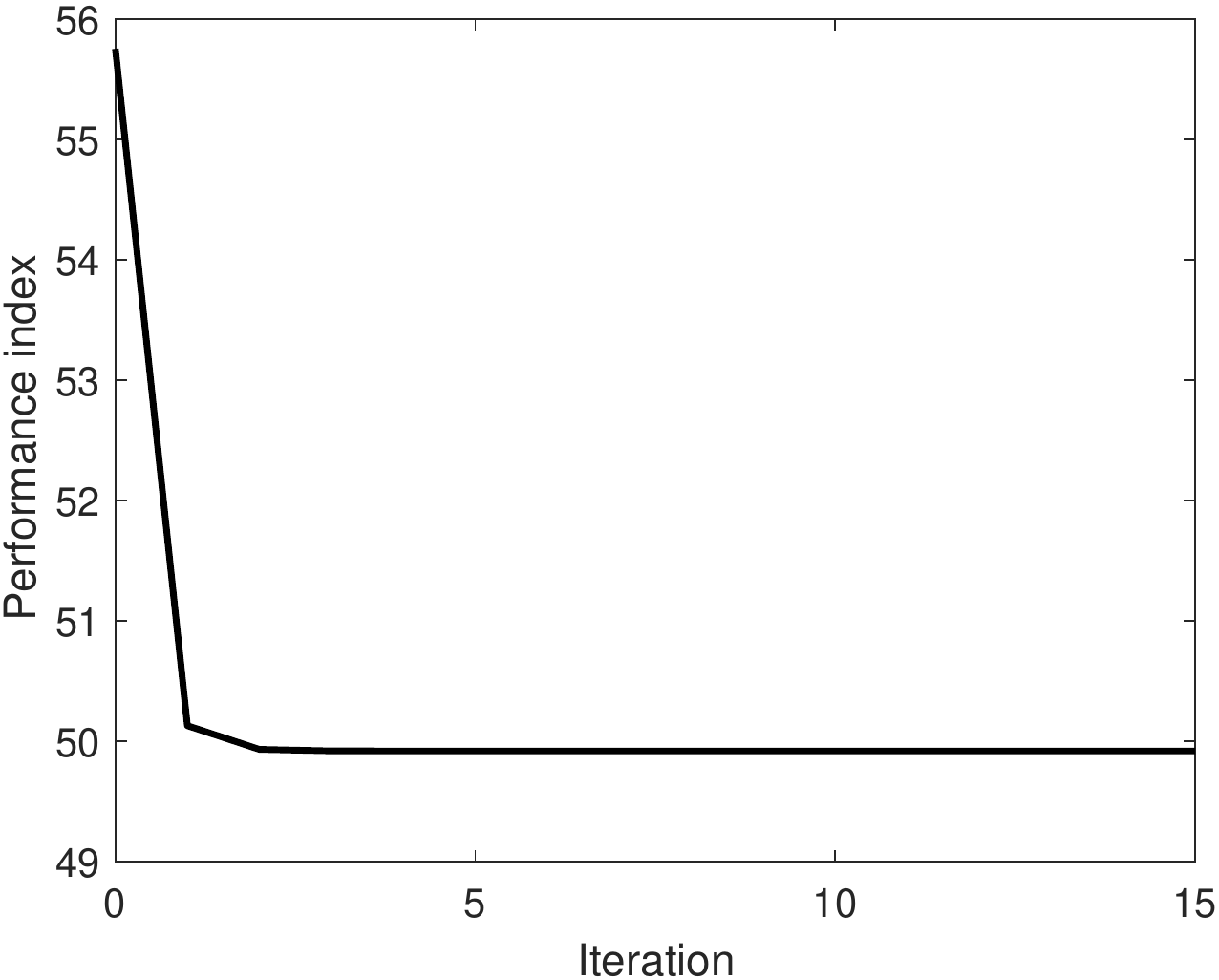}
		\caption{Convergence of performance index}\label{Fig:1}
	\end{figure}
	
	\begin{figure}
		\centering
		\includegraphics[scale=0.52]{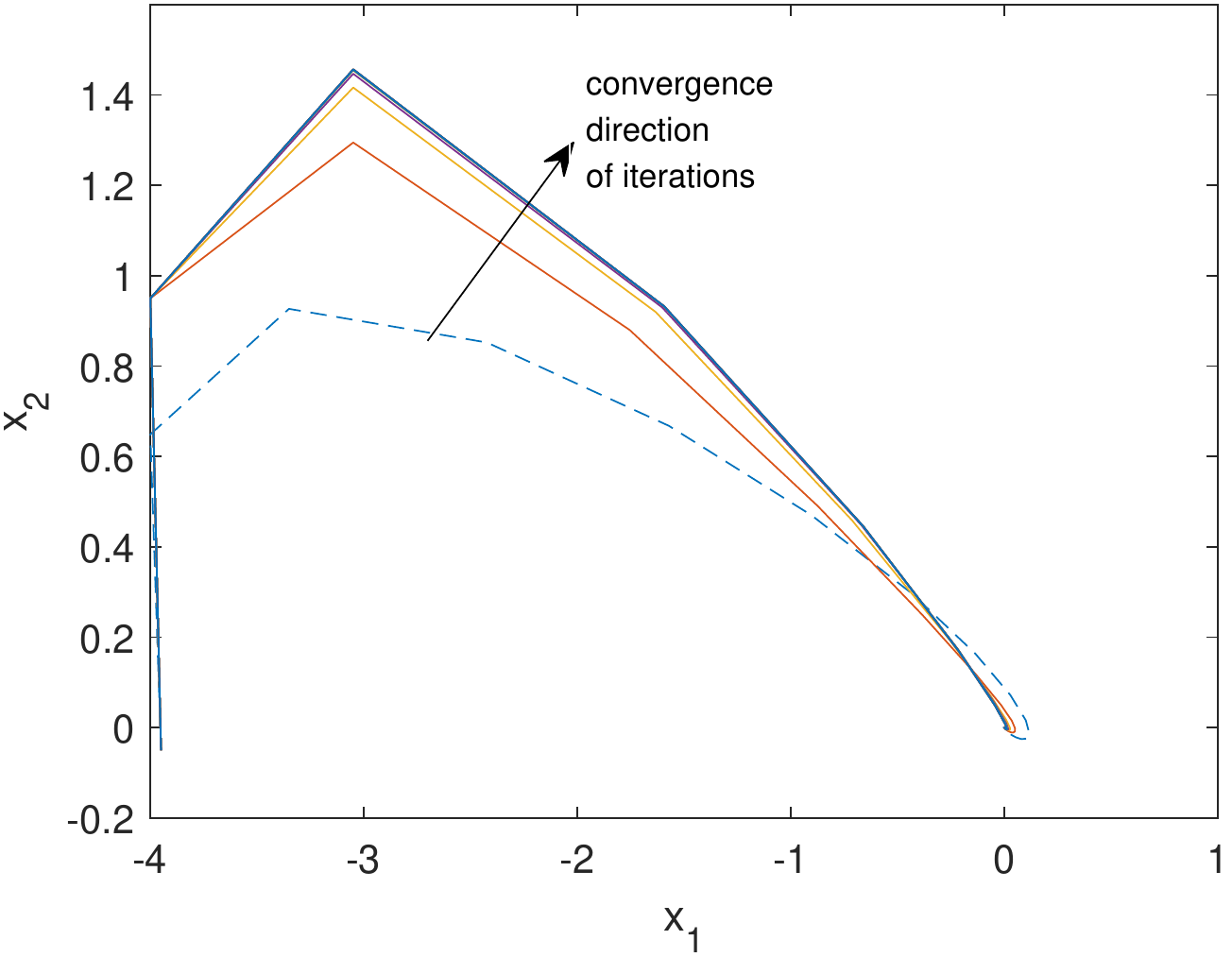}
		\caption{State trajectory of each iteration(dash curve is the initial feasible state trajectory)}\label{Fig:2}
	\end{figure}
	
	\subsubsection{Nonlinear case}
	
	Consider a nonlinear system:
	\begin{equation}
	x(k+1)=Ax(k)+g(x(k))+Bu(k),\nonumber
	\end{equation}
	
	with $x(k)=\begin{pmatrix}x^1(k)\\x^2(k)\end{pmatrix}$, $A=\begin{pmatrix}1&1\\0&1\end{pmatrix}$, $B=\begin{pmatrix}0\\1\end{pmatrix}$, $g(x(k))=(x^1(k)x^2(k)(1+sin(x^1(k)x^2(k))),0)^T$ and $x(0)=\begin{pmatrix}-3.95\\-0.05\end{pmatrix}$. The constraints for the system are 
	\begin{equation}
	\begin{pmatrix}-4\\-4\end{pmatrix}\le x(k)\le\begin{pmatrix}4\\4\end{pmatrix},~k\in\mathbb{N}\nonumber
	\end{equation}
	and
	\begin{equation}
	-1\le u(k)\le 1,~k.\in\mathbb{N}\nonumber
	\end{equation}
	
	The performance index to be minimized is $\sum_{k=0}^{\infty}\|x(k)\|_2^2+\|u(k)\|_2^2$. The initial feasible state and control sequence is generated by using an open-loop controller to drive the state to a small neighborhood of the origin and then using a stabilizing linear feedback controller. The prediction horizon is also chosen as 4. In Fig.~\ref{Fig:3} we present the evolution of the performance along iterative learning. In Fig.~\ref{Fig:4}, the state trajectories of iterations are shown. In Fig.~\ref{Fig:5} we show the trajectories of the last 17 iterations, from which the convergence of the trajectories is clear.
	
	\begin{figure}
		\centering
		\includegraphics[scale=0.52]{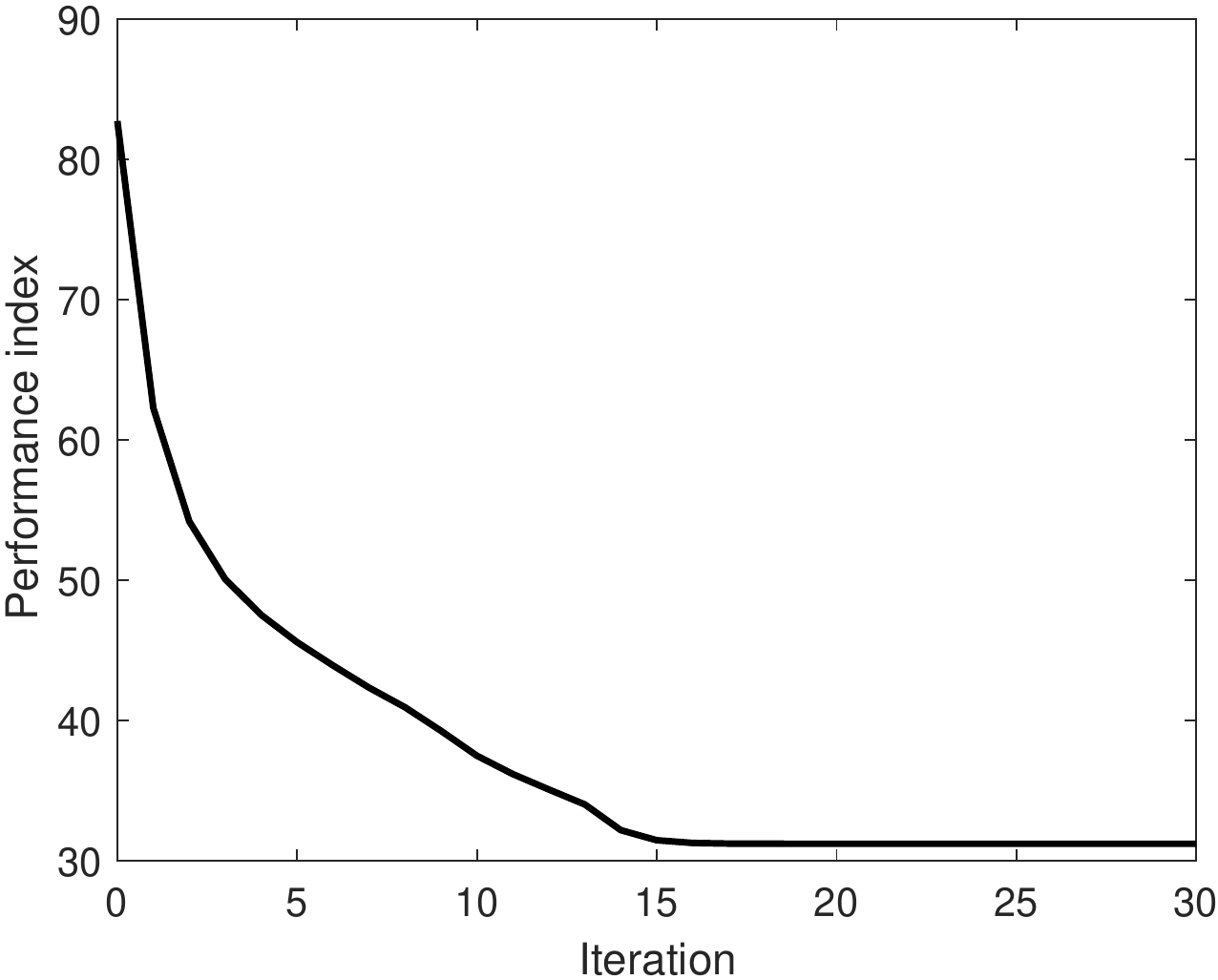}
		\caption{Convergence of performance index}\label{Fig:3}
	\end{figure}
	
	\begin{figure}
		\centering
		\includegraphics[scale=0.52]{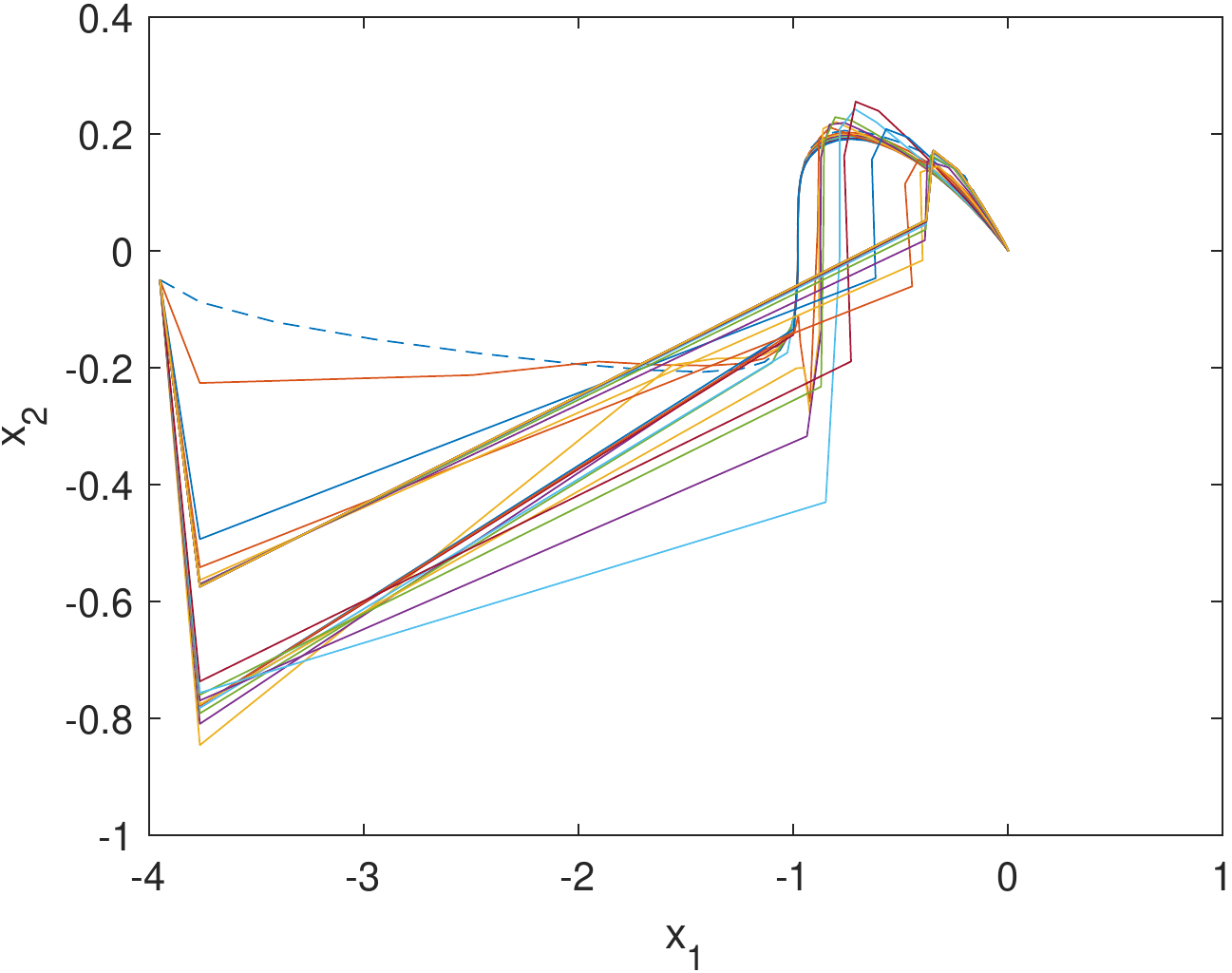}
		\caption{State trajectory of each iteration(dash curve is the initial feasible state trajectory)}\label{Fig:4}
	\end{figure}
	
	\begin{figure}
		\centering
		\includegraphics[scale=0.52]{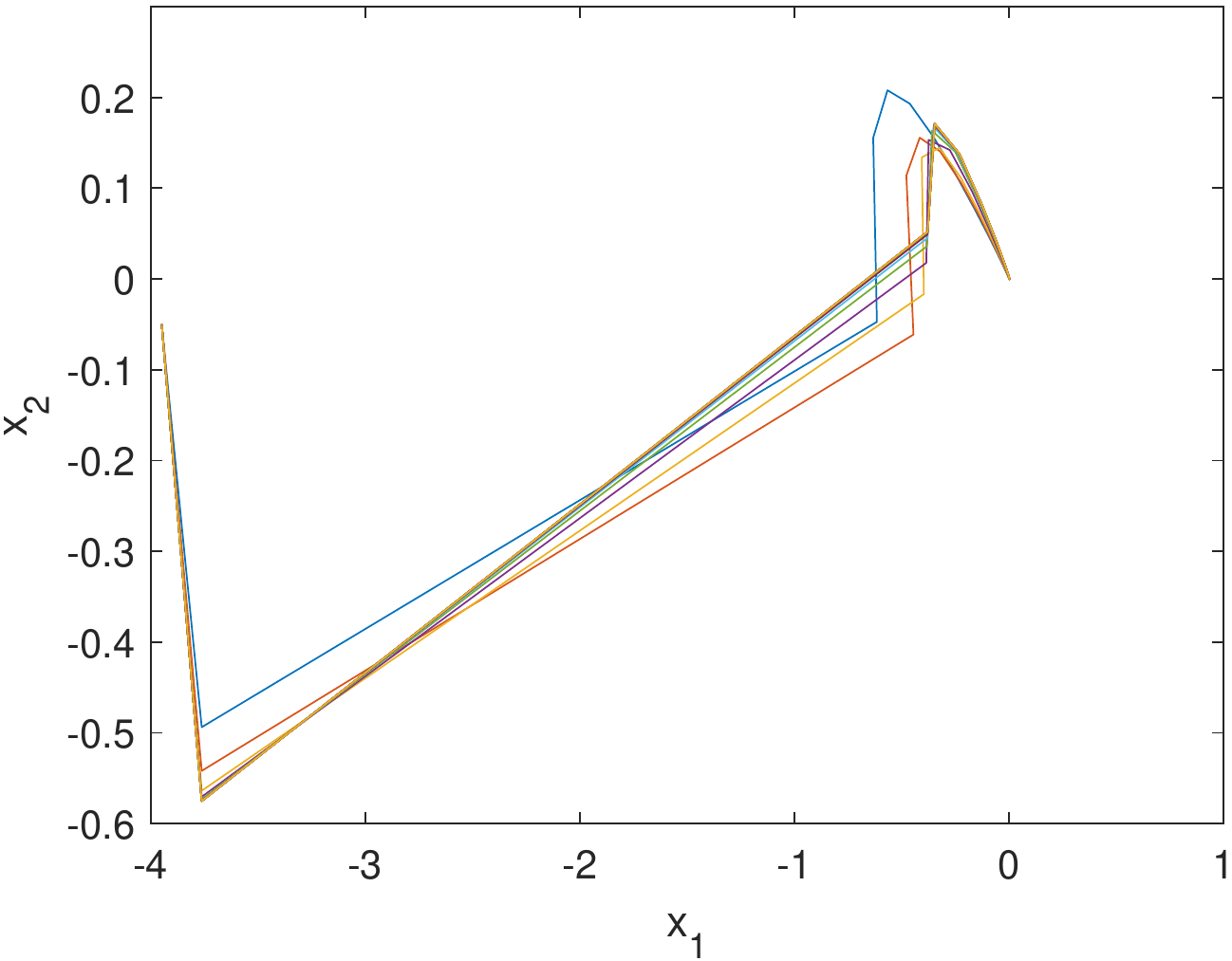}
		\caption{State trajectories of the last 17 iterations}\label{Fig:5}
	\end{figure}
	
	\subsection{Constrained tracking}
	\subsubsection{Linear agent}
	
	In this example, we consider the following agent:
	\begin{eqnarray}
	x(k+1)=Ax(k)+u(k),\nonumber
	\end{eqnarray}
	with $A=\begin{pmatrix}1&1\\0&1\end{pmatrix}$, $x(0)=\begin{pmatrix}0\\0\end{pmatrix}$.
	
	The constraints of the agent are
	
	\begin{equation}
	\begin{pmatrix}-4\\-4\end{pmatrix}\le x(k)\le\begin{pmatrix}5\\5\end{pmatrix},~k\in\mathbb{N}\nonumber
	\end{equation}
	and
	\begin{equation}
	\begin{pmatrix}-1\\-1\end{pmatrix}\le u(k)\le \begin{pmatrix}1\\1\end{pmatrix},~k\in\mathbb{N}.\nonumber
	\end{equation}
	
	The target trajectory is a square with width 4, center $(4,4)$ and period $T=16$. The prediction horizon is chosen as $N=4$. Note that due to the state and input constraints, perfect tracking is impossible. The optimal reachable trajectory is defined by the following optimization problem:
	\begin{equation}
	\min_{x_0,u_0,\ldots,u_{T-1}}\sum_{k=0}^{T-1}\|x(k)-r(k)\|_2^2\nonumber
	\end{equation} 
	subject to
	\begin{eqnarray}
	x(k+1)&=&Ax(k)+u(k),\nonumber\\
	x(0)&=&x_0,\nonumber\\
	x(T)&=&x(0),\nonumber\\
	\begin{pmatrix}-4\\-4\end{pmatrix}&\le& x(k)\le\begin{pmatrix}5\\5\end{pmatrix},\nonumber\\
	\begin{pmatrix}-1\\-1\end{pmatrix}&\le& u(k)\le \begin{pmatrix}1\\1\end{pmatrix},\nonumber\\
	k&=&0,\ldots,T-1.\nonumber
	\end{eqnarray}
	This optimal reachable trajectory will not be implemented in the control algorithm. It is labeled in Fig.~\ref{Fig:7} by using yellow cross. The tracking error $\sum_{k=0}^{\infty}\|x(k)-r(k)\|_2^2$ is infinite. So we consider the average tracking error as $\lim_{T\to\infty}\frac{1}{T}\sum_{k=0}^{T-1}\|x(k)-r(k)\|_2^2$. The initial feasible trajectory is a given periodic trajectory starts from and ends at the origin with period $T=16$. In Fig.~\ref{Fig:6}, we show the performance index of each iteration. In Fig.~\ref{Fig:7}, the state trajectory of each iteration is shown. We can see that the trajectory converges after 3 iteration. As expected, the trajectory converges to the optimal reachable trajectory.

	\begin{figure}
		\centering
		\includegraphics[scale=0.52]{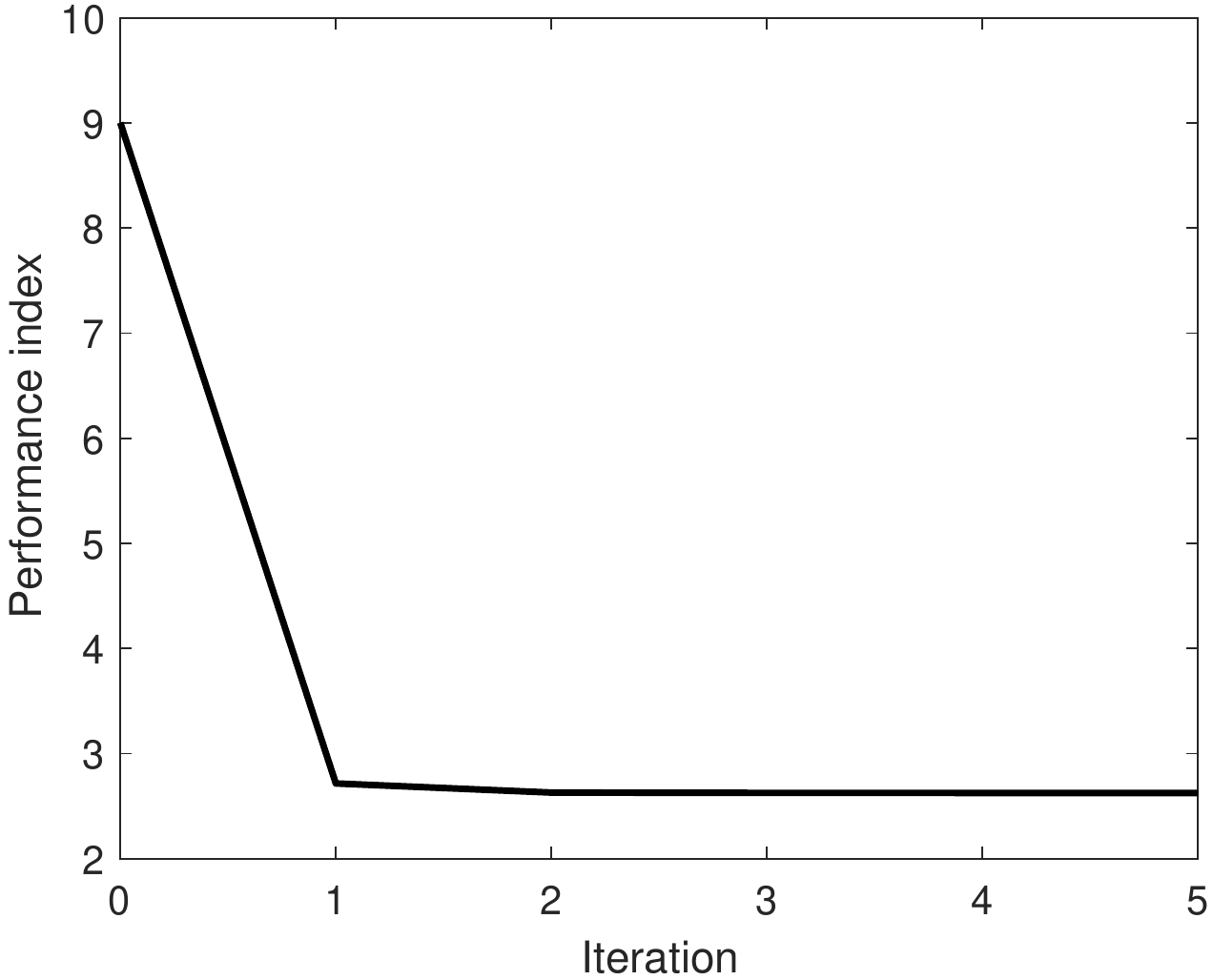}
		\caption{Convergence of performance index}\label{Fig:6}
	\end{figure}
	
	\begin{figure}
		\centering
		\includegraphics[scale=0.52]{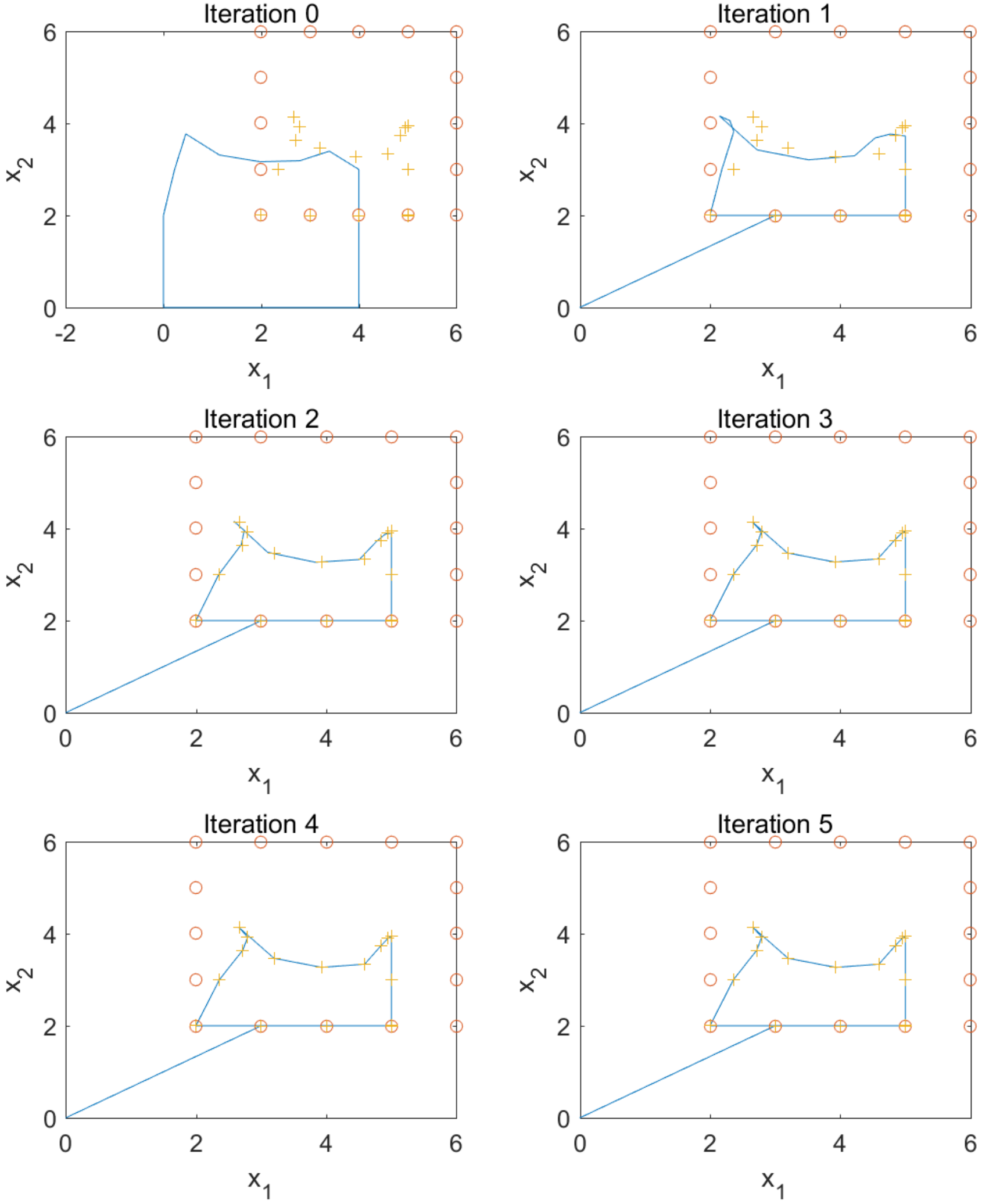}
		\caption{Trajectory of each iteration (circle denotes target trajectory, cross denotes optimal reachable trajectory)}\label{Fig:7}
	\end{figure}
	
	\subsubsection{Nonlinear vehicle}
	
	In this example we consider a simplified nonlinear vehicle model:
	\begin{eqnarray}
	\dot{x}&=&vcos(\theta),\nonumber\\
	\dot{y}&=&vsin(\theta),\nonumber\\
	\dot{v}&=&a,\nonumber\\
	\dot{\theta}&=&\omega,\nonumber
	\end{eqnarray}
	where $(x,y)$ denotes the position of this vehicle, $v$ is the velocity, $\theta$ is the direction of the velocity, $a$ is the acceleration and $\omega$ is the angle velocity. Suppose that the control input is the acceleration and the angle velocity and the constraints are $-15\text{m}/\text{s}^2\le a\le 15\text{m}/\text{s}^2$ and $-12\text{rad}/\text{s}\le w\le12\text{rad}/\text{s}$. This system is discretized with sampling time interval $\delta=0.1$s. The initial condition of this vehicle is set as $(0,0,0,0)^T$. The target trajectory is a circle with radius of 5 meter, center $(6,6)$ and period $T=4$s. The prediction horizon is chosen as $N=10$. In Fig.~\ref{Fig:8}, we present the average tracking error of each iteration and we can see that it converges after 10 iterations. In Fig.~\ref{Fig:9}, we present the initial feasible trajectory and the trajectory of the first 11 iterations. One can observe that, though the initial trajectory is a totally different periodic trajectory, after a few iterations, the vehicle can learn to approach to the target trajectory.
	
	\begin{figure}
		\centering
		\includegraphics[scale=0.52]{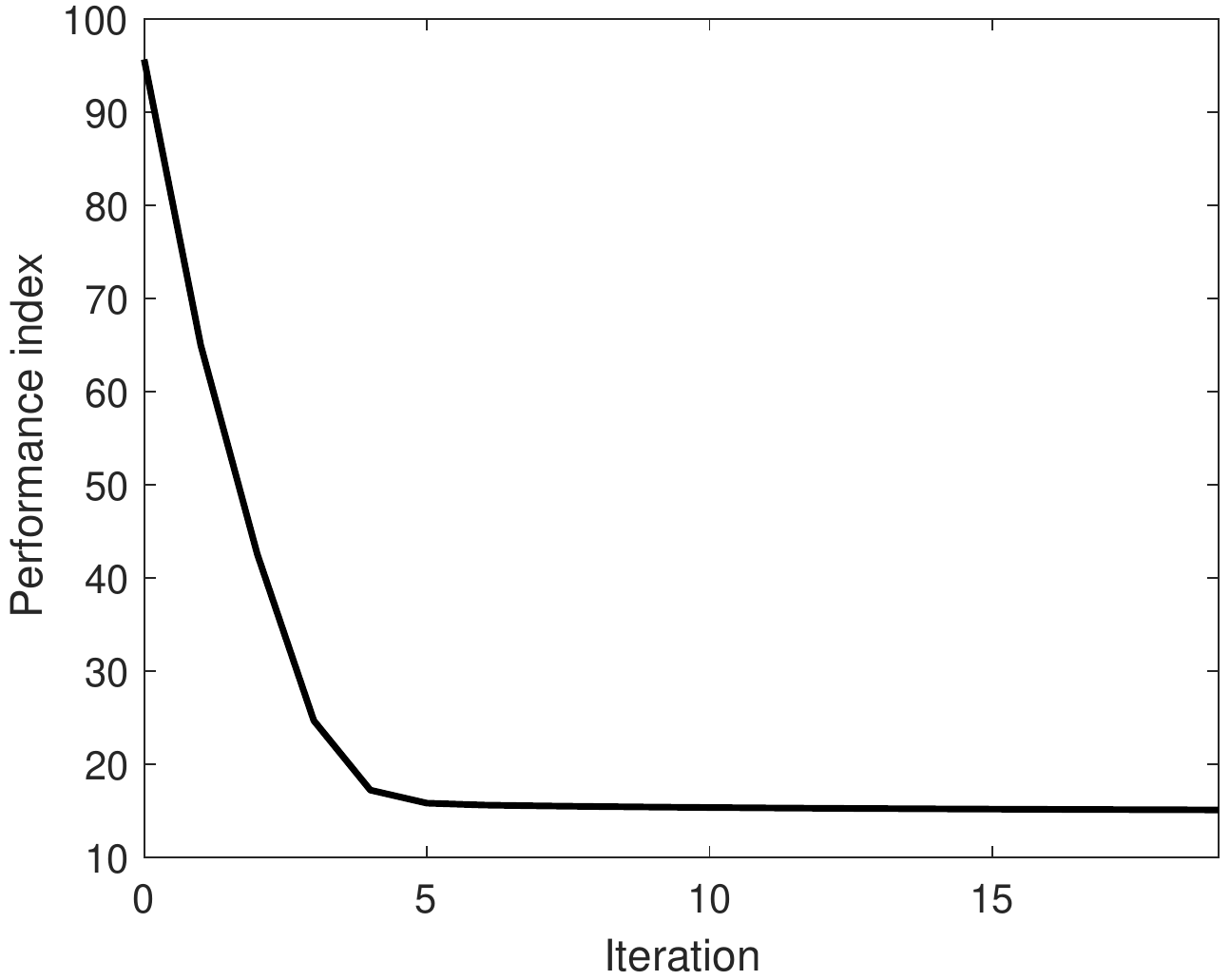}
		\caption{Convergence of performance index}\label{Fig:8}
	\end{figure}
	
	\begin{figure}
		\centering
		\includegraphics[scale=0.6]{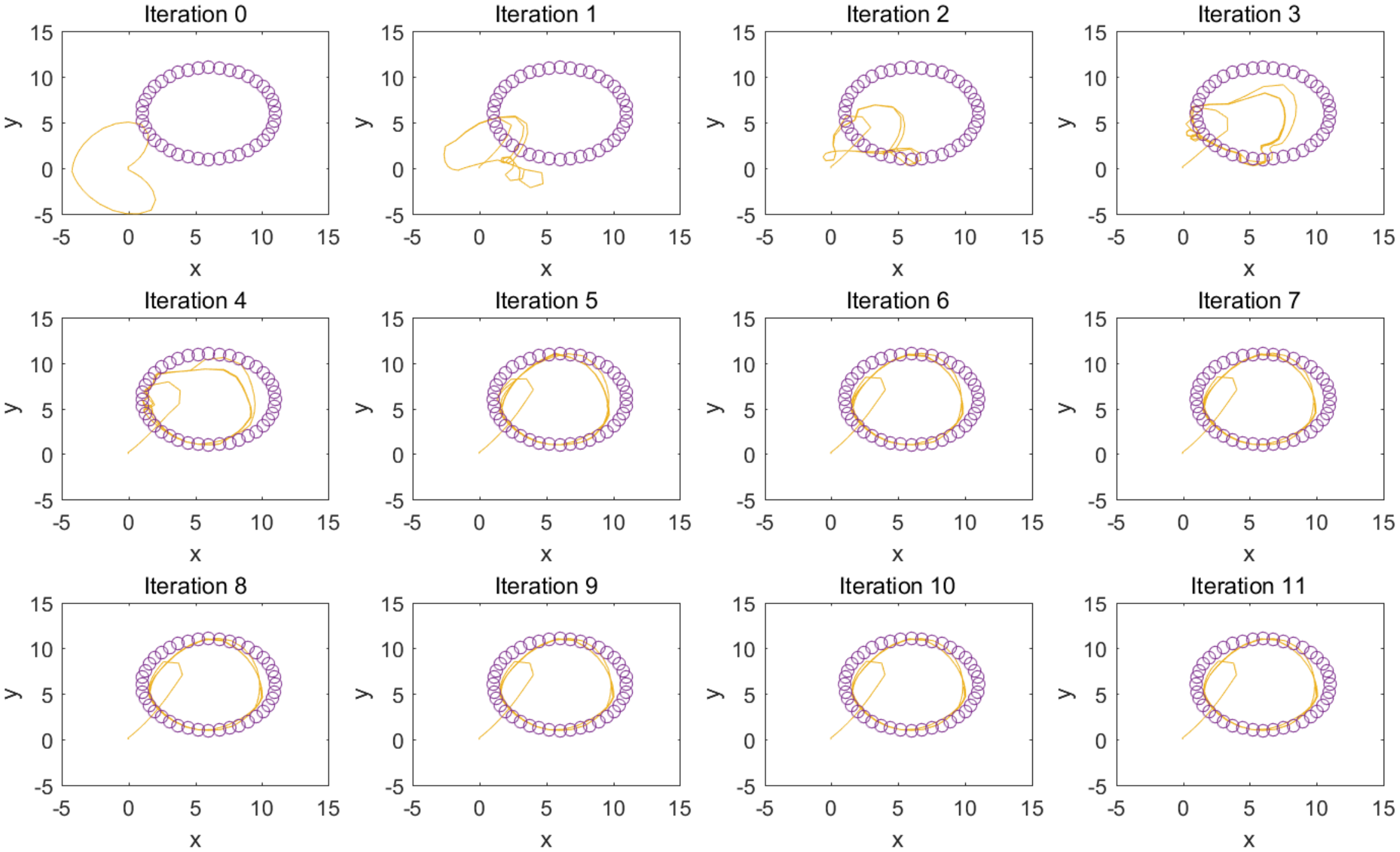}
		\caption{Trajectory of each iteration (circle denotes target trajectory)}\label{Fig:9}
	\end{figure}
	
	\subsection{Consecutive-competitive reactions}
	In the next two examples, we use a nonlinear model of an isothermal chemical reactor with consecutive-competitive reactions \cite{Angeli12}:
	\begin{eqnarray}
	P_0+B\to P_1,\nonumber\\
	P_1+B\to P_2.\nonumber
	\end{eqnarray}
	
	The dynamic model is given by
	\begin{eqnarray}
	\dot{x}_1&=&u_1-x_1-\sigma_1x_1x_2,\nonumber\\
	\dot{x}_2&=&u_2-x_2-\sigma_1x_1x_2-\sigma_2x_2x_3,\nonumber\\
	\dot{x}_3&=&-x_3-\sigma_1x_1x_2-\sigma_2x_2x_3,\nonumber\\
	\dot{x}_4&=&-x_4+\sigma_2x_2x_3,\nonumber
	\end{eqnarray}
	where $x_1,~x_2,~x_3$ and $x_4$ are the concentrations of $P_0$, $B$, $P_1$ and $P_2$ respectively, while $u_1$ and $u_2$ are inflow rates of $P_0$ and $B$, which are the manipulated variables. The parameters $\sigma_1$ and $\sigma_2$ have values 1 and 0.4, respectively. The model is discretized by sampling time interval $0.1$s and the prediction horizon is chosen as $N=5$.

	The time average value of $u_1$ is set as
	\begin{equation}
	Av[u_1]\subset[0,1],\nonumber
	\end{equation}
	and a hard constraint $0\le u_1\le 5$ is also enforced. The control objective is to maximize the average amount of $P_1$ in the effluent flow ($l(x,u)=-x_3$). The steady state of the system is given by $x_s=(0.3874,1.5811,0.3752,0.2373)^T$ and $u_s=(1,2.4310)^T$. The initial feasible trajectory is generated by using an open-loop controller to drive the system state to the steady state and then using $u_s$ to maintain the steady state.
	
	\subsubsection{Pure economic cost}
	We first test the proposed ILEMPC algorithm with a pure economic cost $l(x,u)=-x_3$ for 15 iterations. Only the first 5 iterations are shown in Fig.~\ref{Fig:10}-\ref{Fig:13} for clarity. In Fig.~\ref{Fig:10} and \ref{Fig:11} we present the closed-loop state trajectory and $x_3$ of each iteration. As we can observe, though the initial trajectory is convergent, to obtain more average amount of $P_1$, the controller gradually learns that to make $x_3$ oscillate around $x_3=0.4$ is better. In Fig.~\ref{Fig:12} and \ref{Fig:13} we present the control input and $u_1$ of each iteration. Fig.~\ref{Fig:14} shows that though $u_1$ keeps oscillating, the average of $u_1$ gradually converges to the upper bound of the given set for each iteration. Finally, Fig.~\ref{Fig:15} shows that the average performance is improved along the learning process.
	
	\begin{figure}
		\centering
		\includegraphics[scale=0.5]{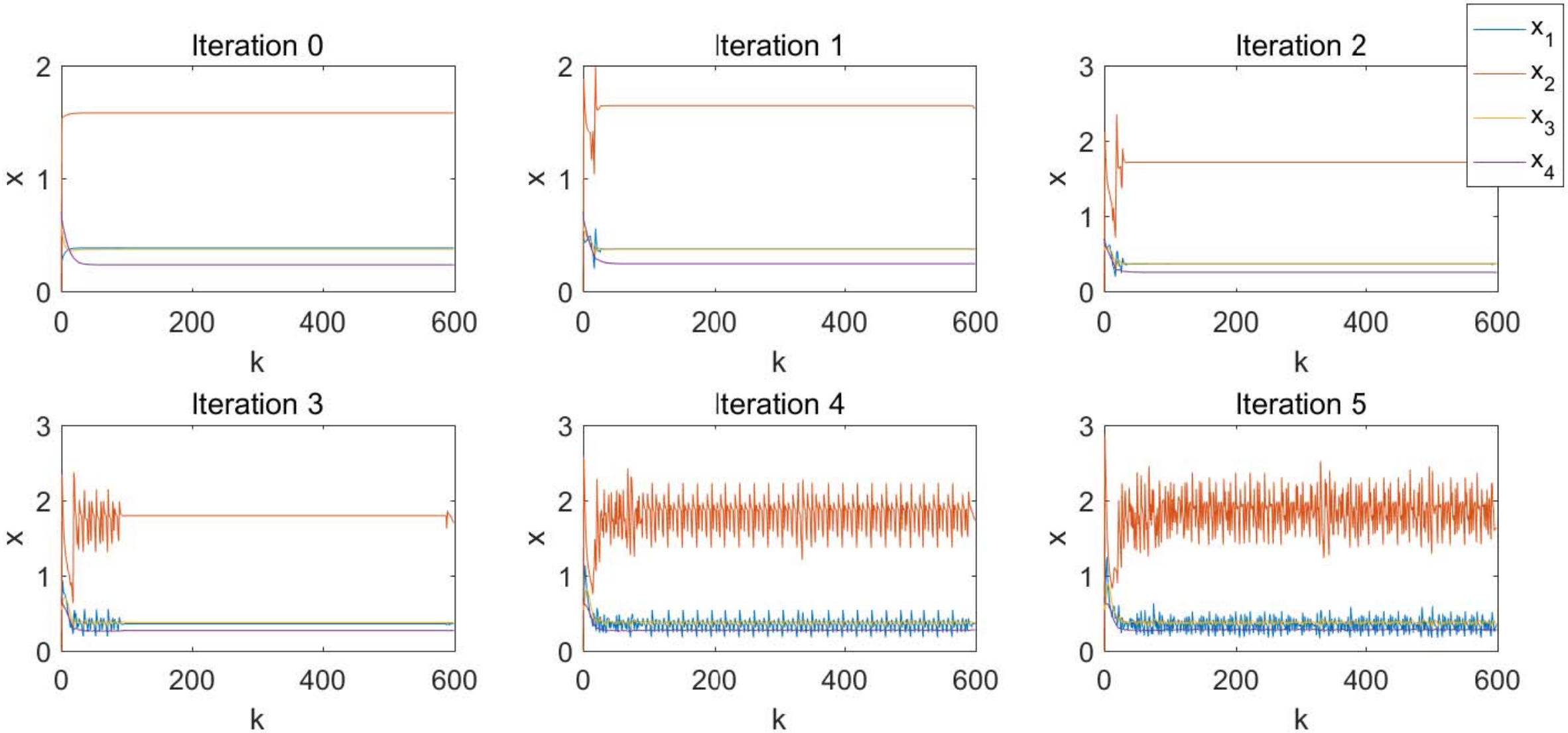}
		\caption{System state with economic cost}\label{Fig:10}
	\end{figure}
	
	\begin{figure}
		\centering
		\includegraphics[scale=0.5]{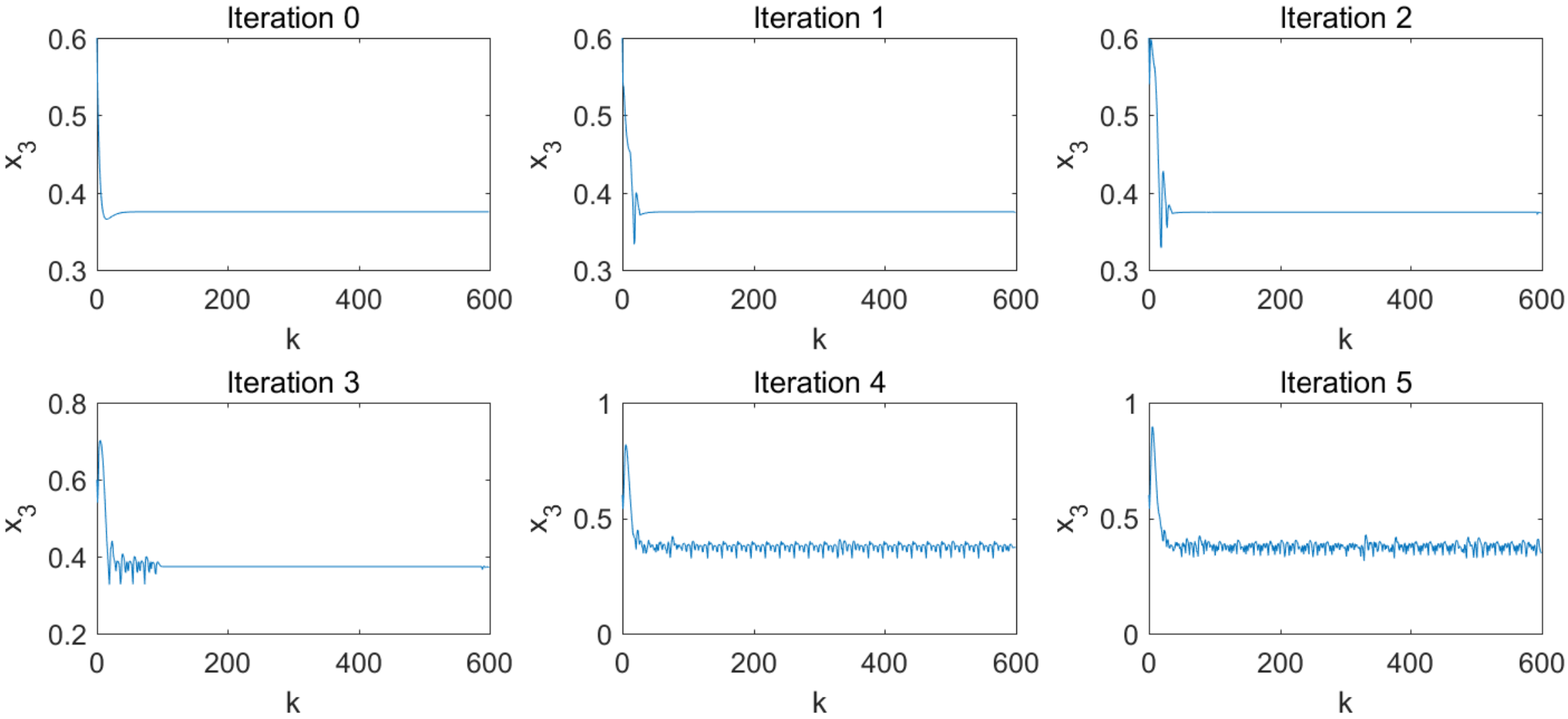}
		\caption{$x_3$ with economic cost}\label{Fig:11}
	\end{figure}
	
	\begin{figure}
		\centering
		\includegraphics[scale=0.5]{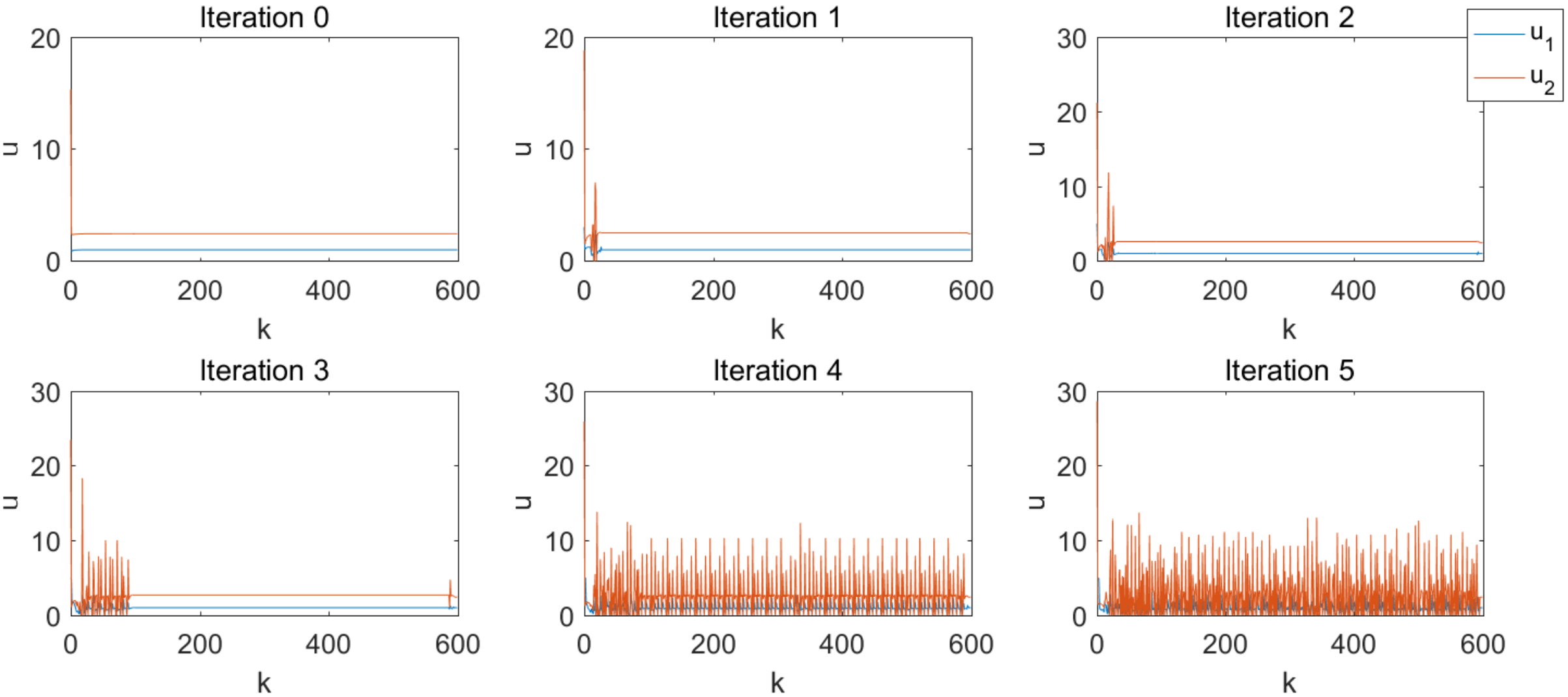}
		\caption{Control input with economic cost}\label{Fig:12}
	\end{figure}
	
	\begin{figure}
		\centering
		\includegraphics[scale=0.5]{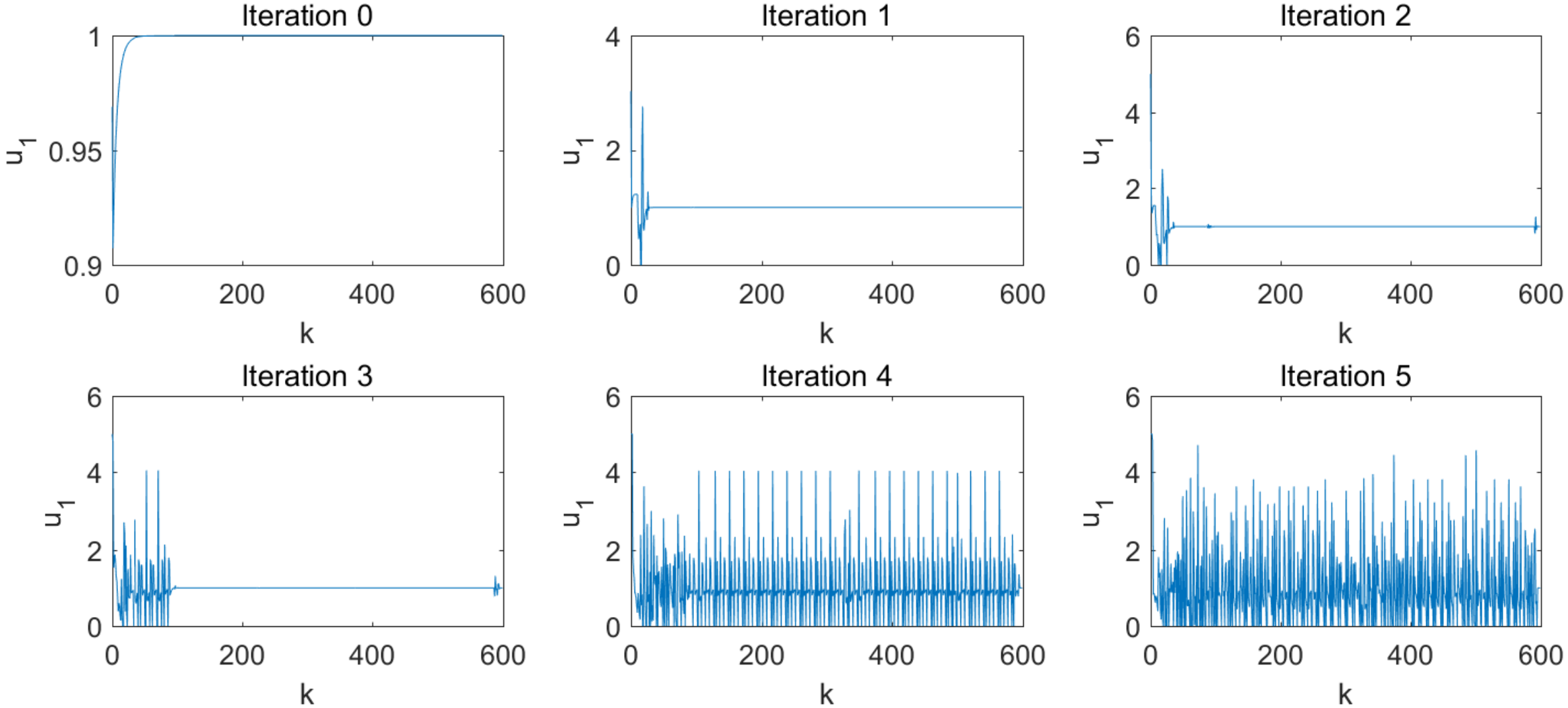}
		\caption{$u_1$ with economic cost}\label{Fig:13}
	\end{figure}
	
	\begin{figure}
		\centering
		\includegraphics[scale=0.5]{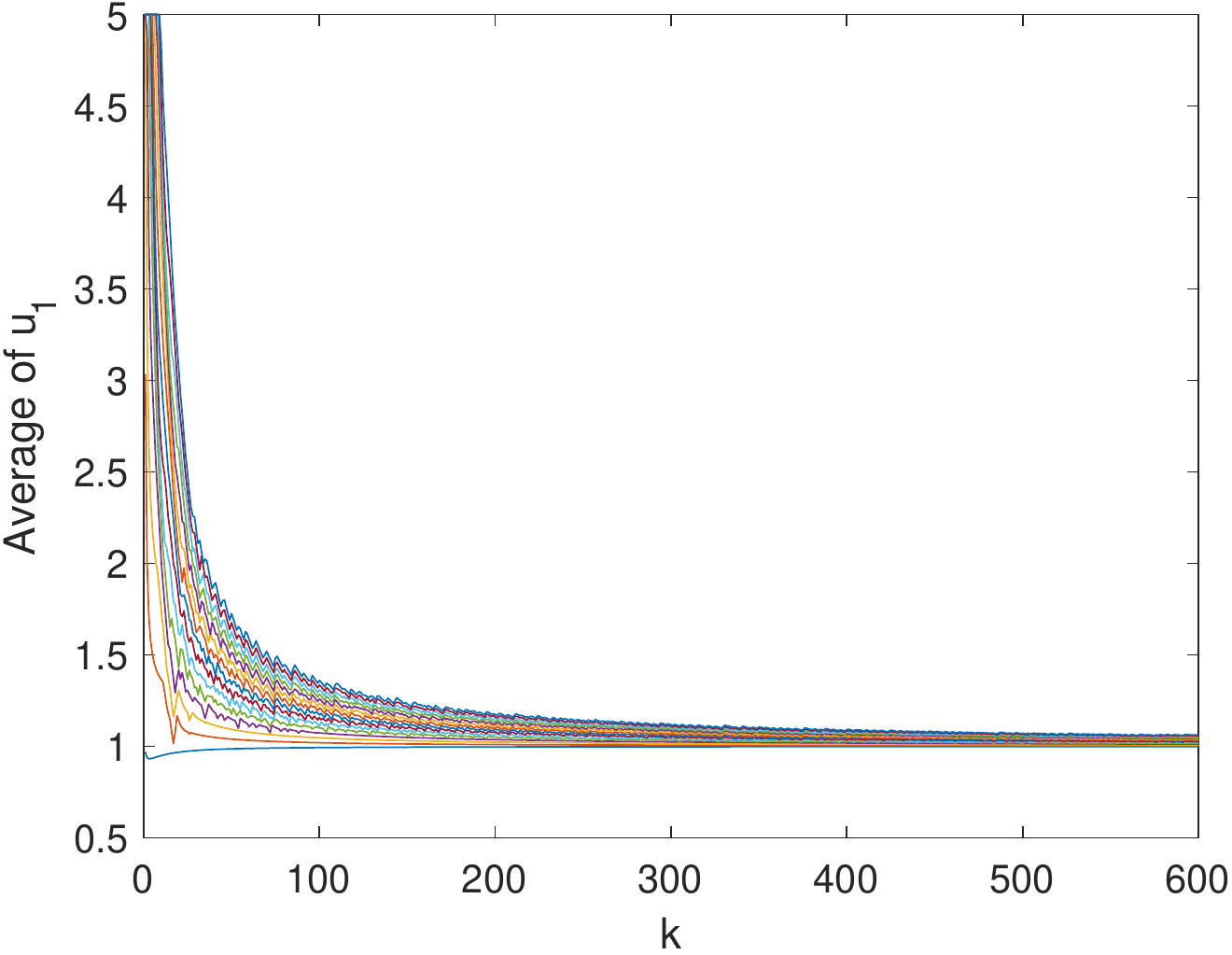}
		\caption{Average amount of $u_1$}\label{Fig:14}
	\end{figure}
	
	\begin{figure}
		\centering
		\includegraphics[scale=0.5]{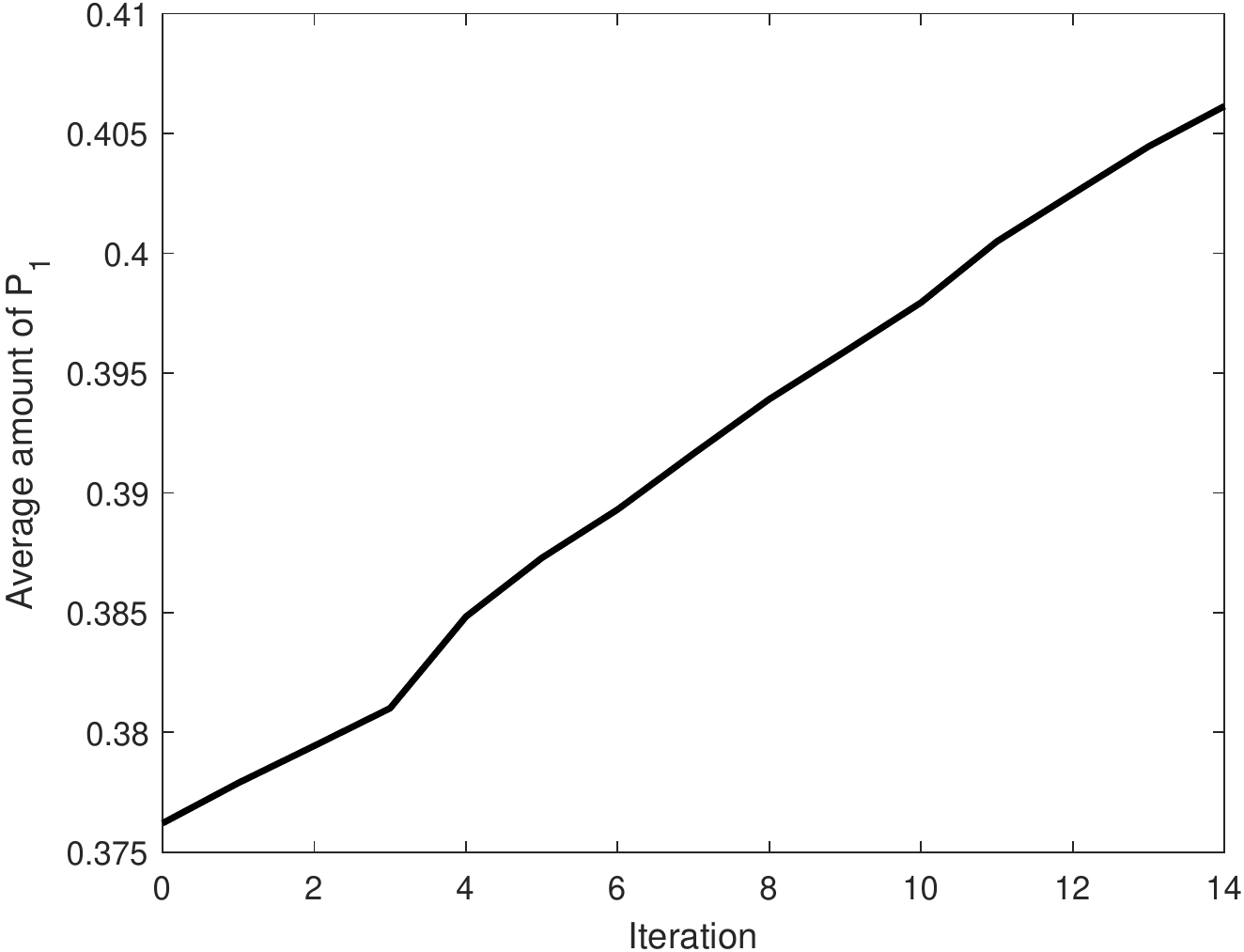}
		\caption{Average amount of $P_1$}\label{Fig:15}
	\end{figure}
	
	\subsubsection{Convexified economic cost}
	We then test the proposed ILEMPC algorithm with a convexified economic cost $l(x,u)=-x_3+\frac{1}{2}(\|x-x_s\|^2_Q+\|u-u_s\|^2_R)~,Q=0.36I_4,~R=0.002I_2$, which makes the dissipative assumption hold. In this case, the average amount of $P_1$ for each iteration will be the same since by Theorem \ref{converge}, the state trajectory of each iteration also converges to the steady state. Therefore, we compare $\sum_{k=0}^{\infty}(l(x_j(k),u_j(k))-l(x_s,u_s))$ for each iteration. The algorithm is tested for 15 iterations. Only the first 5 iterations are shown in Fig.~\ref{Fig:16}-\ref{Fig:19} for clarity. In Fig.~\ref{Fig:16} and \ref{Fig:17} we present the closed-loop state trajectory and $x_3$ of each iteration. In Fig.~\ref{Fig:18} and \ref{Fig:19} we present the control input and $u_1$ of each iteration. Fig.~\ref{Fig:20} shows that the average of $u_1$ gradually converges to the upper bound of the given set for each iteration. Finally, Fig.~\ref{Fig:21} shows that the transient performance is improved along the learning process.
	
	\begin{figure}
		\centering
		\includegraphics[scale=0.5]{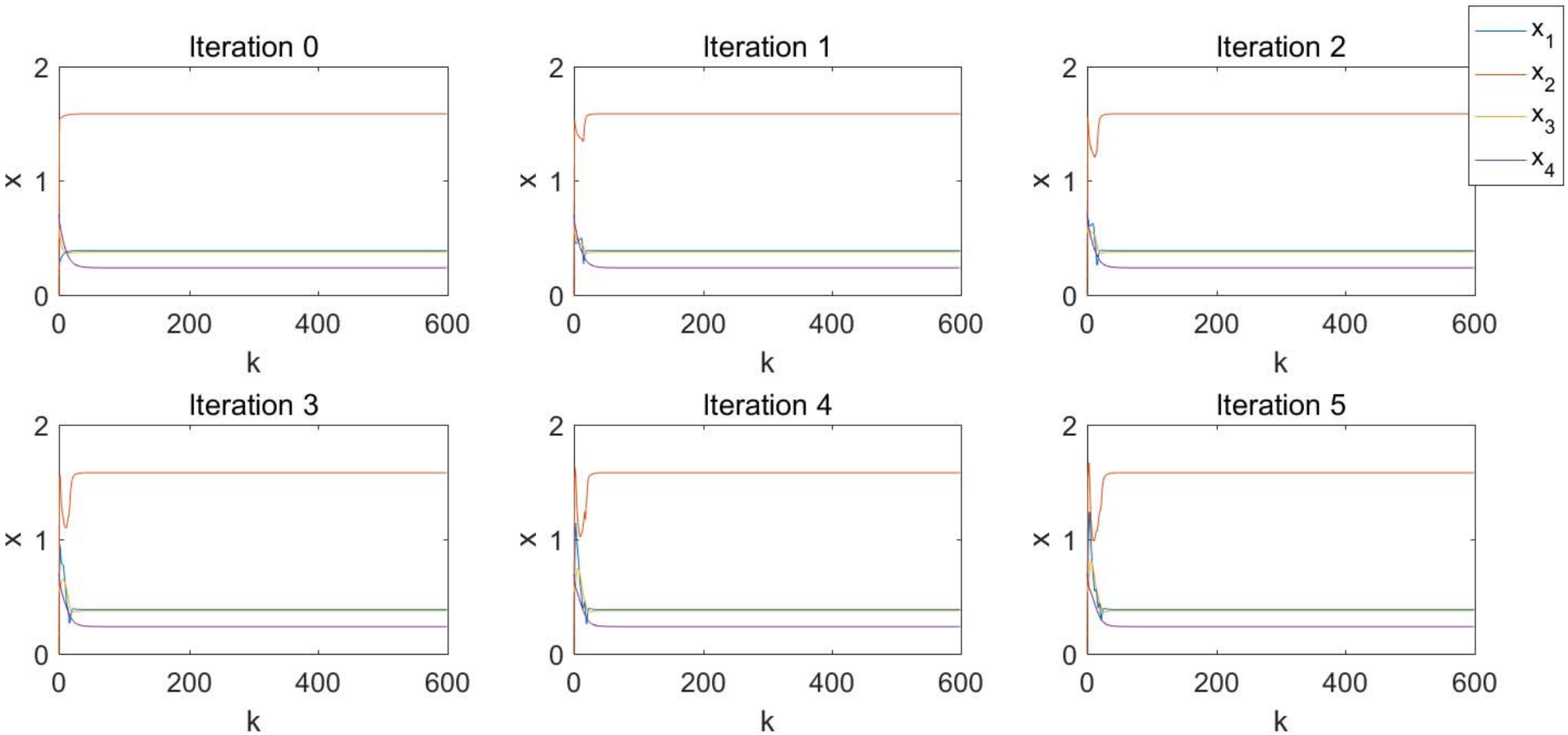}
		\caption{System state with convexified cost}\label{Fig:16}
	\end{figure}
	
	\begin{figure}
		\centering
		\includegraphics[scale=0.5]{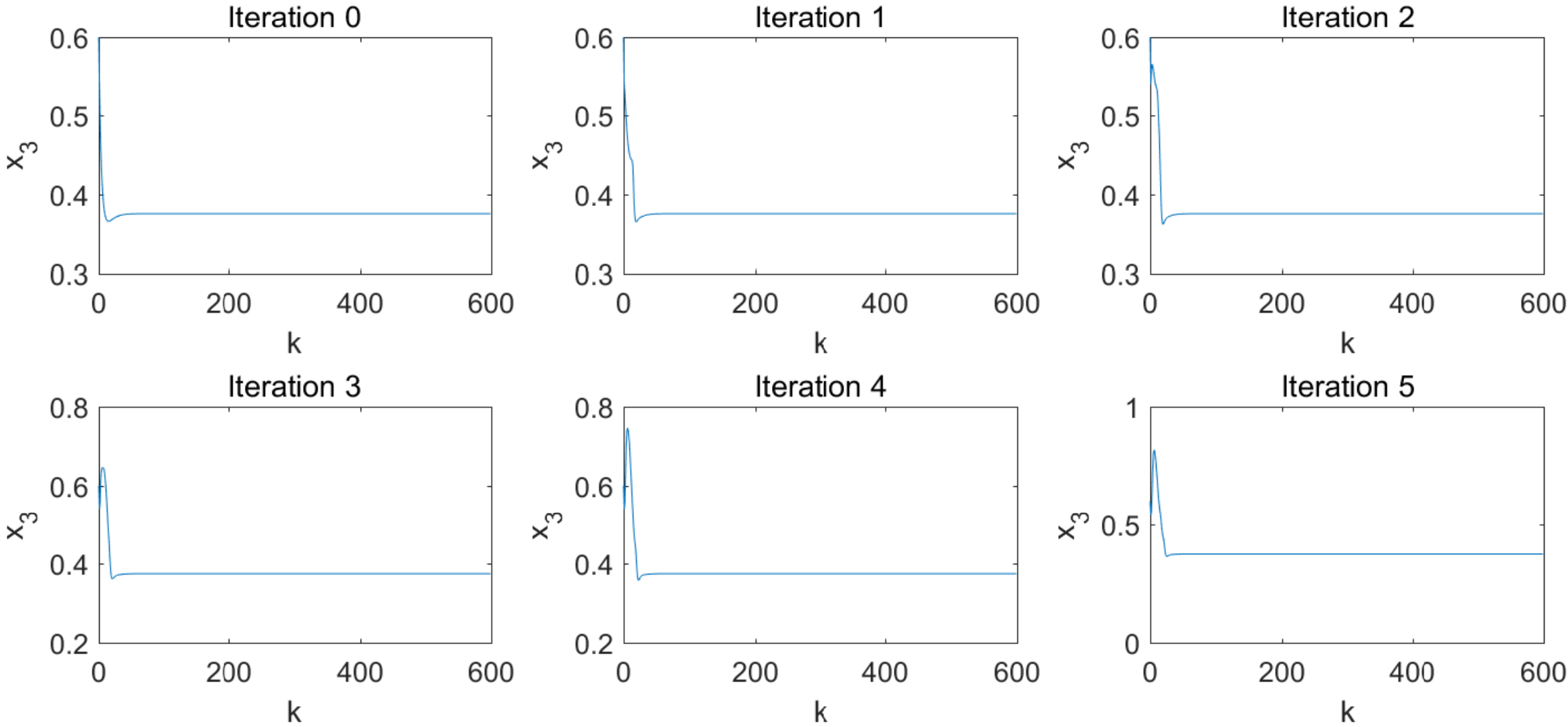}
		\caption{$x_3$ with convexified cost}\label{Fig:17}
	\end{figure}
	
	\begin{figure}
		\centering
		\includegraphics[scale=0.5]{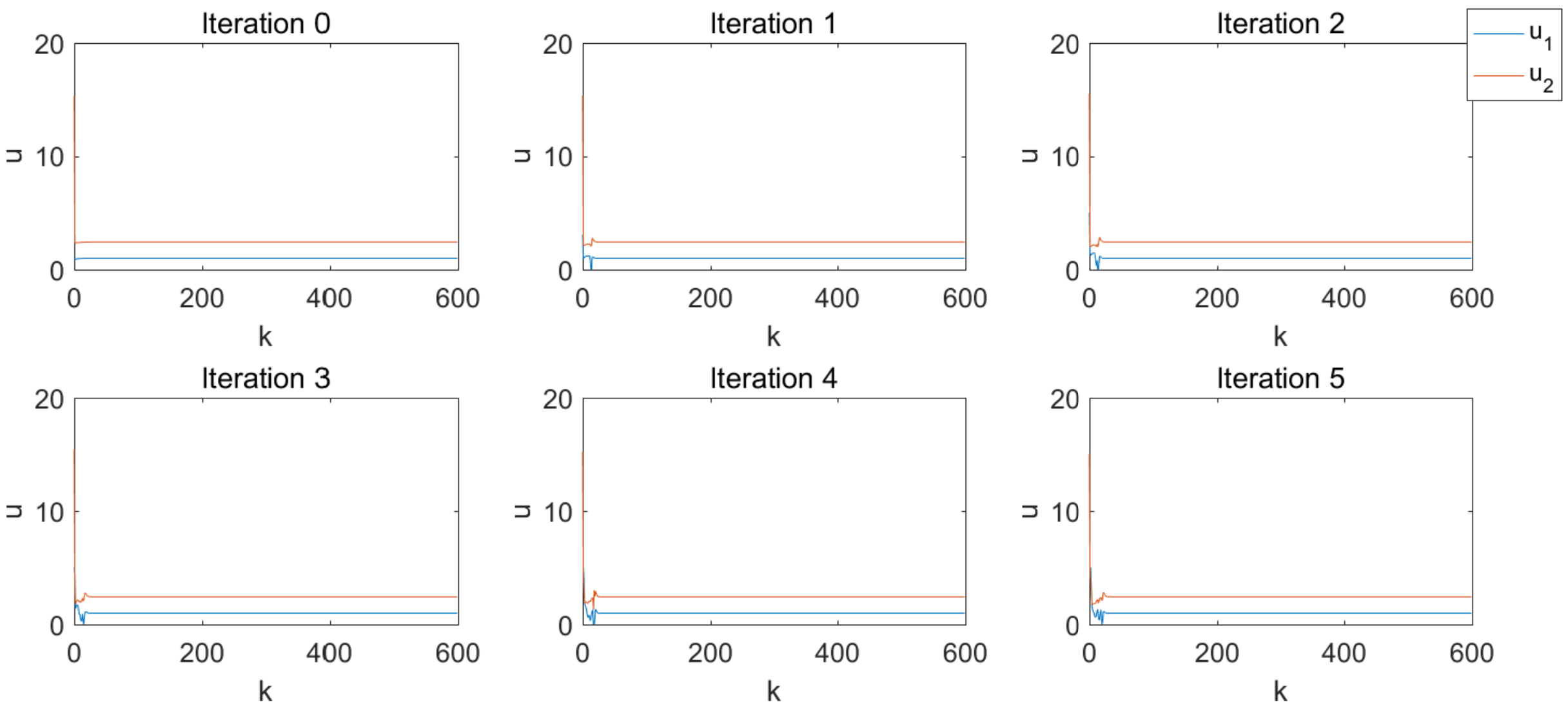}
		\caption{Control input with convexified cost}\label{Fig:18}
	\end{figure}
	
	\begin{figure}
		\centering
		\includegraphics[scale=0.5]{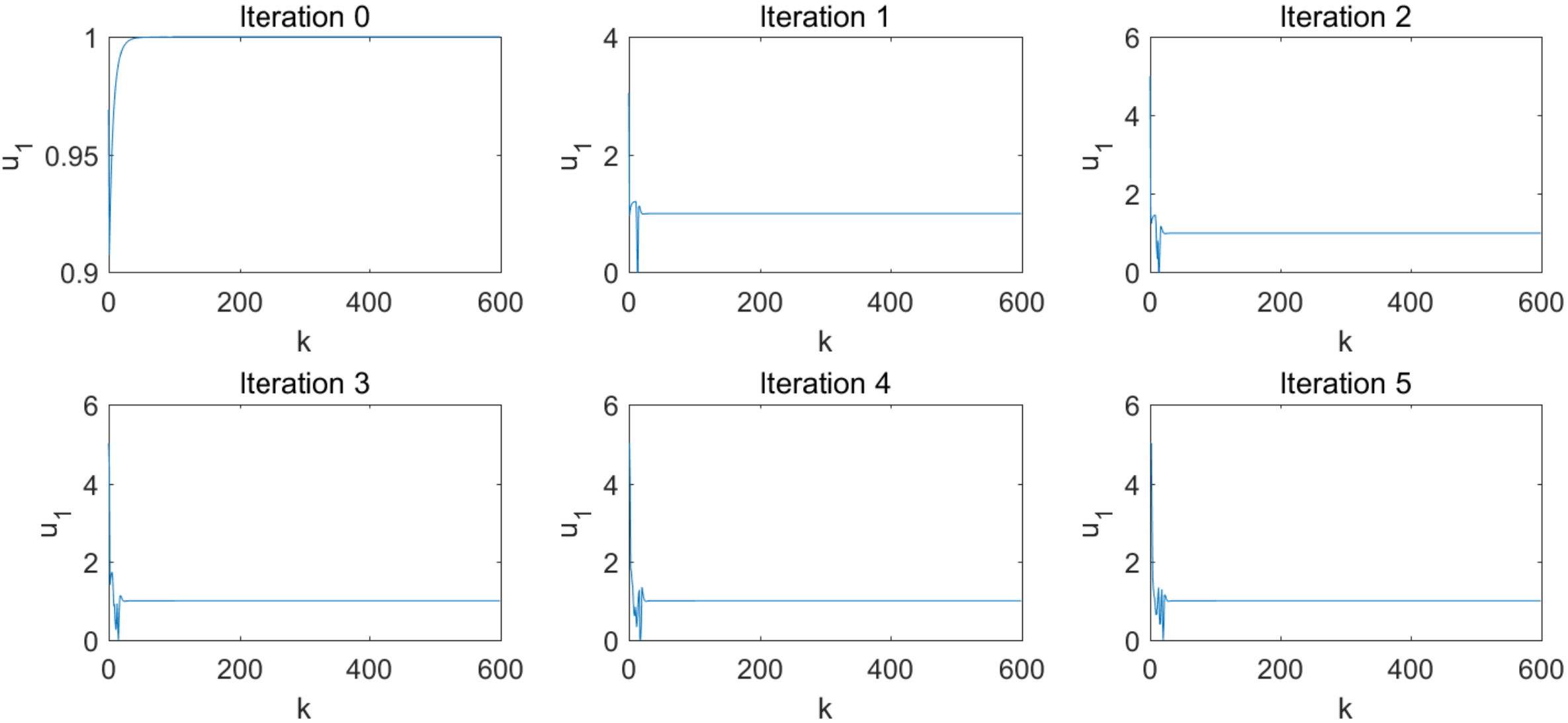}
		\caption{$u_1$ with convexified cost}\label{Fig:19}
	\end{figure}
	
	\begin{figure}
		\centering
		\includegraphics[scale=0.5]{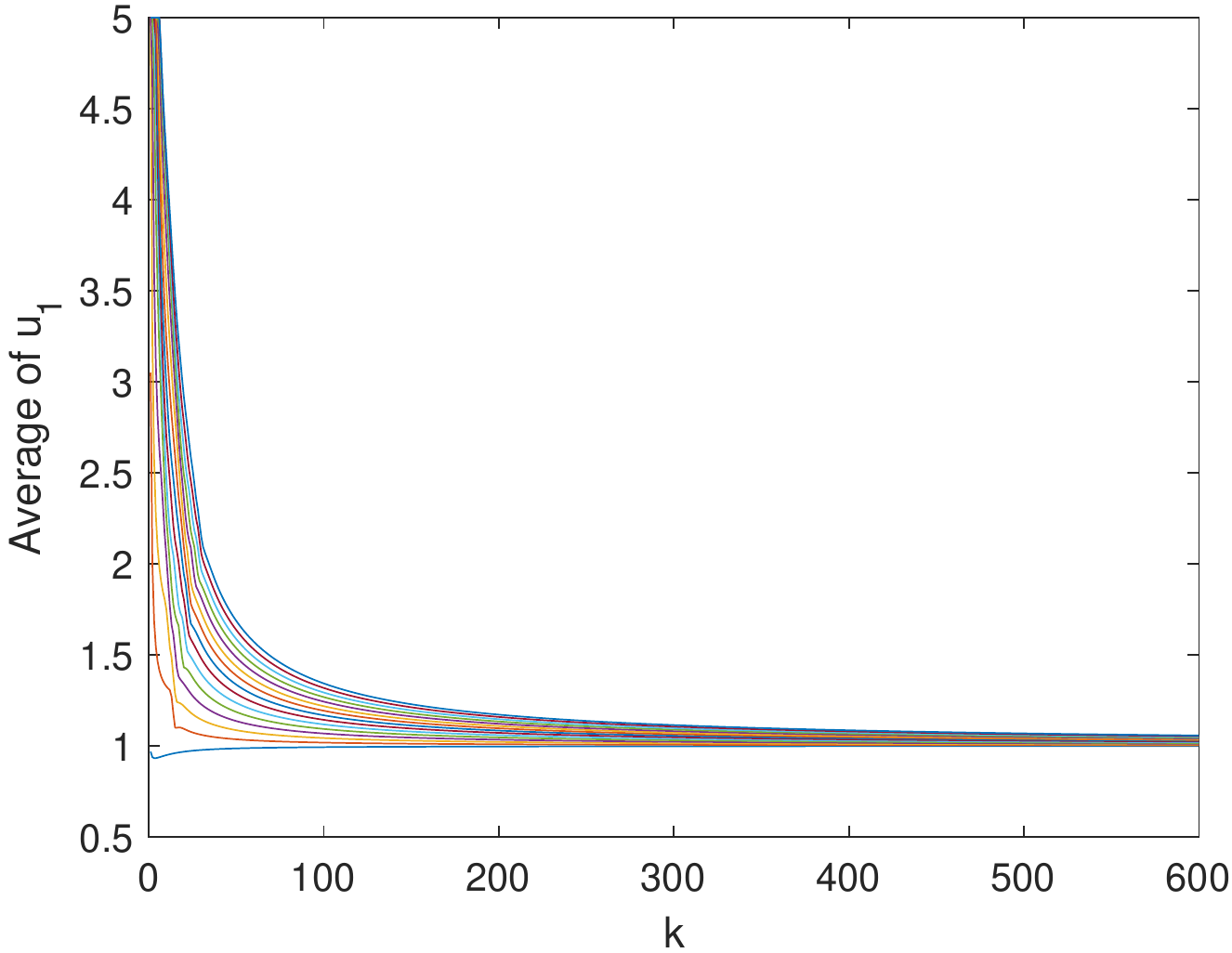}
		\caption{Average amount of $u_1$}\label{Fig:20}
	\end{figure}
	
	\begin{figure}
		\centering
		\includegraphics[scale=0.5]{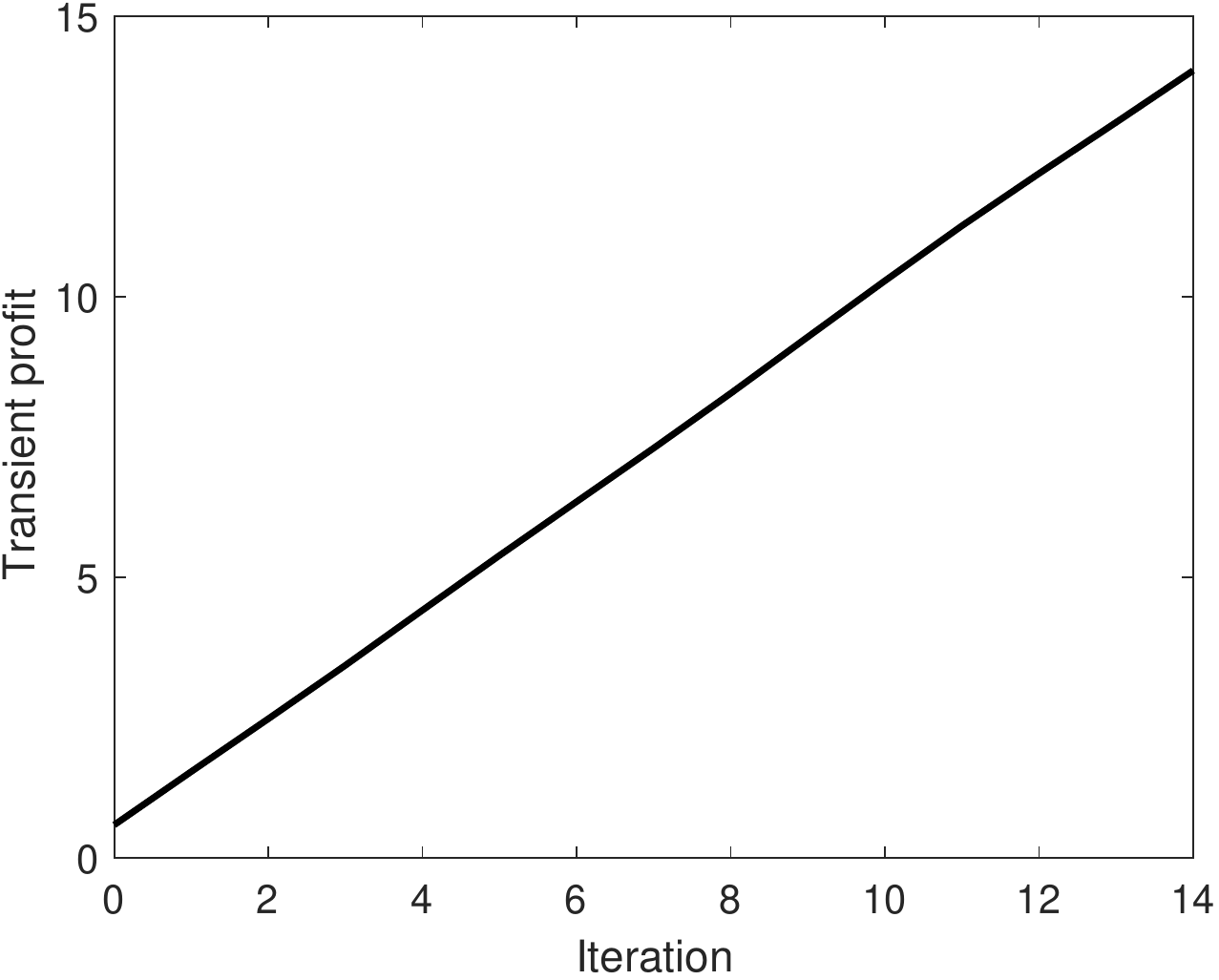}
		\caption{Average amount of $P_1$}\label{Fig:21}
	\end{figure}
	
	\section{Conclusion}\label{conclusion}
	
	In this paper, a learning-based economic model predictive control algorithm for iterative tasks has been proposed. The main features of the proposed control algorithm are: 1) it is capable of exploiting exploit information from the last execution to improve the closed-loop performance; 2) the interested performance index is not limited to tracking error but could contain general economic cost of the plant operation. We have proved that at each iteration, the performance index to be optimized will be no worse than that of the previous iteration. For the stabilization problem, we have proved that under the dissipative assumption, the stability of the initial feasible trajectory is preserved. After that, under some assumptions on the uniqueness of the optimum, we have proved that if the closed-loop trajectory converges to a steady state trajectory, then it is the $N$-receding-horizon optimal trajectory. The proposed ILEMPC has been tested on constrained stabilization problems and unreachable tracking problems for both linear and nonlinear systems and a nonlinear isothermal chemical reator model. The effectiveness of the proposed algorithm has been verified.
	
	\bibliographystyle{IEEEtrans}
	\bibliography{mybib}

\end{document}